	\newcommand*{\QEDA}{\hfill\ensuremath{\blacksquare}}
	\DeclareMathOperator*{\minimize}{minimize}
\DeclareMathOperator*{\st}{subject~to}
\newtheorem{theorem}{Theorem}
\newtheorem{lemma}{Lemma}
\newtheorem{proposition}{Proposition}
\newtheorem{corollary}{Corollary}
\newtheorem{definition}{Definition} 
\newtheorem{proof}{Proof}
\newtheorem{example}{Example}
	\DeclareMathAlphabet\mathbfcal{OMS}{cmsy}{b}{n}
\begin{document}

	\title{\huge Unified Approach to Convex Robust Distributed Control given Arbitrary Information Structures}

	\author{Luca~Furieri,~
	        Maryam Kamgarpour~  \thanks{This research was gratefully funded by the European Union ERC Starting Grant CONENE. The authors are with the Automatic Control Laboratory, Department of Information Technology and Electrical Engineering, ETH Zurich, Switzerland. e-mails: {\tt\small \{furieril, mkamgar\}@control.ee.ethz.ch}}}

	\maketitle


	\begin{abstract}
	We consider the problem of computing optimal linear control policies for linear systems  in finite-horizon. The states and the inputs are required to remain inside pre-specified safety sets at all times despite unknown disturbances. In this technical note, we focus on the requirement  that the control policy is distributed, in the sense that it can only be based on partial information about the history of the outputs. It is well-known that when a condition denoted as Quadratic Invariance (QI) holds, the optimal distributed control policy can be computed in a tractable way. 
Our goal is to unify and generalize the class of information structures over which quadratic invariance is equivalent to a test over finitely many binary matrices. The test we propose certifies convexity of the output-feedback distributed control problem in finite-horizon given any arbitrarily defined information structure, including the case of time varying communication networks and forgetting mechanisms. Furthermore, the framework we consider allows for including polytopic constraints on the states and the inputs in a natural way, without affecting convexity.

	\end{abstract}	
\section{Introduction}
Critical emerging large-scale systems, such as the electric power grid, autonomous vehicles, the internet and financial systems feature autonomous and interacting decision making agents.  Due to geographic distance or privacy concerns, the agents can only base their decisions on  local partial information. This lack of full information has a significant impact on the design of optimal decisions.

It is well-known that the design of optimal control policies given an information structure can be extremely challenging even in simple cases \cite{Witsenhausen}. Several instances of this problem were shown to be intractable \cite{blondel2000survey, papadimitriou1986intractable}.  Significant work has thus been directed towards identifying special cases of this problem for which efficient algorithms can be derived. The work in \cite{ho1972team} showed that when the information structure is Partially Nested (PN) the linear-quadratic Gaussian stochastic control problem can be reduced to a static problem, whose optimal solution is linear. More recently, the work  \cite{voulgaris2001convex} established convexity of the problem of optimal disturbance rejection in broader classes of controlled systems, such as those being ``nested, chained, hierarchical, delayed interaction and communication, or symmetric'' \cite{voulgaris2001convex}.
 In \cite{bamieh2005convex}, optimal distributed controller design for spatially invariant systems was shown to be convex given ``funnel causality'' of the information structure.

The cases \cite{ho1972team,voulgaris2001convex,bamieh2005convex} were unified in the celebrated works \cite{rotkowitz2005tractable,rotkowitz2006characterization,rotkowitz2008information,QIconvexity}, where the authors established  necessary and sufficient conditions on the plant and the  information structure for convexity of optimal control referred to as \emph{quadratic invariance} (QI).  
 Quadratic invariance revealed that full sensor information is needed when all the states dynamically influence each other. Such limitation encouraged including a  communication network in the distributed control system design to restore convexity. The works \cite{wang2014localized,wang2016system} proposed a system level approach, which is based on allowing controllers to share past control inputs, thus improving overall convexity and performance. Input sharing might raise privacy concerns in competitive scenarios where controllers need to keep their control strategies unknown to others. Instead, \cite{rotkowitz2010convexity} considered a time-invariant communication network to share local output measurements with fixed delays. The authors of  \cite{rotkowitz2010convexity}  used the concepts of propagation and transmission delays to derive algebraic conditions that are equivalent to QI for these information structures.

Once a convex problem is obtained under the QI condition, one possible approach is to compute optimal controllers in the form of transfer functions by approximating the corresponding infinite-dimensional convex optimization problem \cite{Alavian}.  In addition to relying on approximations, these techniques might lead  to  controllers whose order is too high for practical implementation. Moreover, operations at the transfer function level can be numerically unstable \cite{shah2013cal} and may not offer sufficient insight into the dynamics of the internal states of controllers. To deal with these  issues, another approach is to explicitly compute distributed optimal controllers within state space representations by using dynamic programming. This approach led to several works, each addressing a specific class of dynamical systems and information structures as follows. The authors in \cite{shah2013cal} considered the case of poset-causal systems given  that controllers can only access partial sensor measurements at each time  and showed that  the problem reduces to a set of decoupled centralized problems.  The case of output sharing between controllers with one-timestep delays was solved in \cite{kurtaran1974linear,sandell1974solution,yoshikawa1975dynamic}.  Combining sparse sensor measurements and fixed delays in the communication between controllers was considered   in \cite{lamperski2015optimal}. In \cite{matni2014optimal}, the authors solved the more general case of time-varying delays given a special class of systems consisting of two interconnected plants. 

In several control scenarios of practical interest, it is important to ensure that states and control inputs remain inside specified safety sets at all times. However, the past works on optimal distributed control based on convex programming  \cite{voulgaris2001convex,bamieh2005convex,rotkowitz2005tractable,rotkowitz2006characterization,rotkowitz2008information
}  and on explicit dynamic programming \cite{shah2013cal,kurtaran1974linear,sandell1974solution,yoshikawa1975dynamic,lamperski2015optimal,matni2014optimal} did not address the case where constraints on the state and input trajectories must be enforced. 
A technique to include state and input constraints is distributed model predictive control \cite{Zeilinger, Rantzer}. Typically,  this approach involves dividing a large-scale system into subsystems that  communicate in order to minimize a given cost function, while being robust to  the dynamical couplings. Convexity of the resulting optimization problems is obtained at the cost of potential conservativeness by optimizing over open-loop control policies. However, repeating the optimization at every time step mitigates this suboptimality.  A different approach is to consider closed-loop predictions that allow for longer prediction horizons without running into infeasibility of the optimization problem \cite{bemporad1998reducing}. Following this line of work,  \cite{furieri2017robust} has recently considered the problem of convex robust distributed control with closed-loop predictions given a fixed sensing topology. In this technical note, we adopt the closed-loop prediction approach.

The  literature review above reveals that, when distributed control is considered, different instances of dynamical systems \cite{voulgaris2001convex,shah2013cal}, time-invariant and time-varying information structures \cite{shah2013cal,kurtaran1974linear,sandell1974solution,yoshikawa1975dynamic,lamperski2015optimal, matni2014optimal}, and the inclusion of state and input constraints \cite{Zeilinger, Rantzer, bemporad1998reducing, furieri2017robust} are treated as separate challenges, to be addressed with a variety of different approaches and mathematical tools. The contributions of this paper are as follows. First, we provide a unified and generalized approach to convex design of optimal distributed controllers. We show that sparsity constraints, delay constraints, arbitrary time-varying sensing and communication networks and forgetting mechanisms can all be combined within the framework of robust distributed control in finite-horizon. Second, we characterize convexity given any arbitrary information structure as per above, in terms of a test on finitely many  binary matrices.  The test we  propose generalizes those proposed in \cite{rotkowitz2006characterization, furieri2017robust, rotkowitz2010convexity,abara2018quadratic} for particular information structures, in the sense that they can all be alternatively proved as direct corollaries of our main result. To show generality of our approach, we provide an example of a quadratically invariant information structure which was not captured in the previous works. Third, our framework allows for including polytopic constraints on the state and input trajectories in a natural and forthright way.

The paper is structured as follows. In Section \ref{se:preliminaries}, after introducing the problem under study we explain how to encode arbitrary information structures using binary matrices whose dimensions grow with the prediction horizon. In Section~\ref{se:convexity} we present our main result on a generalized binary test  for quadratic invariance, and show an example of a QI information structure which was not captured in previous work.   
   In Section \ref{sec:dyn_complete}, we specialize our result to the case of combining a fixed sensing topology and a fixed communication network. In doing so, we present alternative proofs to the results of \cite{furieri2017robust,rotkowitz2010convexity} as direct corollaries of our main theorem. These new proofs offer additional insight on the results of \cite{furieri2017robust,rotkowitz2010convexity} in terms of the minimal set of inequalities that must be verified for convexity.
\section{Preliminaries}
\label{se:preliminaries}

In this section, we introduce some notation on sparsity structures, and  present the problem of distributed optimal control in finite-horizon with hard constraints on the states and the inputs.  

\subsection{Notation and sparsity structures}	
 We use $\mathbb{R}$ and $\mathbb{N}$ to  denote the sets of real numbers and  positive integers respectively. For any $n \in \mathbb{N}$, we use $\mathbb{N}_{[a,b]}\subseteq \mathbb{N}$ to denote the set of integers from $a$ to $b$.  The $(i,j)$-th element in a matrix $Y \in \mathbb{R}^{m \times n}$ is referred to as $Y(i,j)$.  Throughout the paper we indicate the $i$-th component of a vector $v \in \mathbb{R}^{n}$ as $v^i \in \mathbb{R}$.  We use $I_n$ to denote the identity matrix of size $n \times n$ and  $0_{m \times n}$ to denote the zero matrix of size $m \times n$.  The sparsity structure of a matrix can be conveniently represented by a binary matrix. A binary matrix is a matrix with entries from the set $\{0,1\}$, and we use $\{0,1\}^{m \times n}$ to denote the set of $m \times n$ binary matrices.
Given a binary matrix $X \in \{0,1\}^{m \times n}$, we define the sparsity subspace $\text{Sparse}(X) \subseteq \mathbb{R}^{m \times n}$ as
\begin{equation*}
\text{Sparse}(X) := \{Y \in \mathbb{R}^{m \times n}\mid Y(i,j) = 0, \; \text{if}\; X(i,j)=0, \forall i,j\}.
\end{equation*}
Similarly, given $Y\in \mathbb{R}^{m \times n}$, we define a binary matrix $X\, := \, \text{Struct}(Y)$ as
$$
    X(i,j)=\begin{cases}
	   1 & \text{if}\;Y(i,j) \neq 0\,,\\
	   0 & \text{otherwise}\,.
	\end{cases}
$$

 Let $X, \hat{X} \in \{0,1\}^{m \times n}$ and $Z \in \{0,1\}^{n \times p}$ 
be binary matrices. Throughout the paper, we adopt the following conventions:
$$
    \begin{aligned}
        X + \hat{X} &:= \text{Struct}(X + \hat{X}), \\
        XZ&:=\text{Struct}(XZ). \\
    \end{aligned}
$$
We state that $X \leq \hat{X}$ if and only if $X(i,j)\leq \hat{X}(i,j)\;\forall i,j$, and $X < \hat{X}$ if and only if $X \leq \hat{X}$ and there exist indices $i,j$ such that $X(i,j)<\hat{X}(i,j)$. {Also, we denote} $X \nleq \hat{X}$ if and only if there exist indices $i,j$ such that $X(i,j)>\hat{X}(i,j)$.
%
\subsection{Problem formulation}
\label{se:setup}
	We consider a discrete time system
	\begin{equation}
	\label{eq:system}
	\begin{aligned}
	&x_{k+1}=Ax_k+Bu_k+Dw_k\,, \quad y_k=Cx_k+Hw_k\,,\\
	\end{aligned}
	\end{equation}
	where $k\geq 0$ is an integer number, $y_k \in \mathbb{R}^p$, $u_k \in \mathbb{R}^m$, $w_k\in \mathcal{W}$ and $\mathcal{W} \subseteq \mathbb{R}^n$ is a closed and bounded set of possible disturbances.  The system starts from a known initial condition $x_0\in \mathbb{R}^n$. Let us define a prediction horizon of length $N \in \mathbb{N}$. Our goal is to minimize a cost function of the history of states and inputs $J(x_0,\cdots,x_N,u_0,\cdots,u_{N-1})$. Furthermore, the states and inputs need to satisfy
	\begin{equation}
	\label{eq:constraints_state_input}
	\begin{aligned}
	&\begin{bmatrix}x_k\\u_k\end{bmatrix} \in \Gamma\subseteq \mathbb{R}^{n+m}\,,\quad x_N \in \mathcal{X}_f \subseteq \mathbb{R}^{n}\,,
	\end{aligned}
	\end{equation}
	 for all $k \in \mathbb{N}_{[0,N-1]}$ and for all realizations of disturbances taken from set $\mathcal{W}$. Each control input can base its decisions on a given subset of the present and past outputs. This subset is denoted as the	\emph{information structure}.  An information structure can be arbitrary, in the sense that controllers may measure, receive, memorize or forget any output measurements at any time.	
The search in the class of all output-feedback policies is  intractable.   Hence, a possible approach is to restrict the search to the class of controllers that are affine in the history of the outputs. The time-varying output-feedback affine  control policy is expressed as

		\begin{equation}
	\label{eq:affine_feedback}
	\begin{aligned}
	&u_k=\sum_{j=0}^kL_{k,j}y_j+g_k\,,
	\end{aligned}
	\end{equation}
	for all time instants $k \in \mathbb{N}_{[0,N-1]}$. In (\ref{eq:affine_feedback}), the matrices $L_{k,j} \in \mathbb{R}^{m \times p}$ represent the output-feedback part of the control policy, while the vectors $g_k \in \mathbb{R}^{m}$ represent a feed-forward term that can be synthesized to improve performance.   For every  $j \in \mathbb{N}_{[0,k]}$, we consider  binary matrices $S_{k,j}\in \{0,1\}^{m \times p}$ describing the information known at time $k$ to controllers  about the outputs at time $j\leq k$. Specifically, $S_{k,j}(a,b)=1$ if and only if the $a$-th scalar control input at time $k$ knows the $b$-th scalar output  at time $j$.   A given information structure on the input can thus be enforced as
	\begin{equation}
	\label{eq:affine_feedback_constraints}
	\begin{aligned}
	&L_{k,j} \in \text{Sparse}(S_{k,j}) \,,
	& \forall k \in \mathbb{N}_{[0,N-1]},~\forall j \in \mathbb{N}_{[0,k]}\,. 
	\end{aligned}\
	\end{equation}
\emph{Arbitrary  information structures}:	 The general description we suggest in (\ref{eq:affine_feedback_constraints}) allows for unifying the treatment of  several  classes of information structures considered in the literature. We provide a list below.
\begin{enumerate}
\item Time-invariant sensing topologies described by $S \in \{0,1\}^{m \times p}$ as per \cite{furieri2017robust} can be encoded by selecting $S_{k,j}=S$ for all $k \in \mathbb{N}_{[0,N-1]}$ and $j \in \mathbb{N}_{[0,k]}$.
\item Networked systems with  fixed communication delays between subsystems as considered in \cite{lamperski2015optimal} can be encoded as follows.  Let $d_{a,b} \in \mathbb{N}$ be the time steps it takes  for the information about the $b$-th output to reach the $a$-th input, where the case $d_{a,b}=\infty$ is denoted in \cite{lamperski2015optimal} as a sparsity constraint. Then, for any given time $k$, we  set $S_{k,l}(a,b)=1$ for all $l\in \mathbb{N}_{[0,k-d_{a,b}]}$ and $S_{k,l}(a,b)=0$ for all $l\in \mathbb{N}_{[k-d_{a,b}+1,k]}$.
\item Time-varying delays in a networked control system \cite{matni2014optimal} can be encoded as follows. Let $e_{a,b}(k)$ be the number of time steps it takes at time $k$ for the information about the $b$-th output to reach the $a$-th input.  Then, for any given time $k$, we set $S_{k,l}(a,b)=1$ for each $l \in \mathbb{N}_{[0,k-e_{a,b}(k)]}$ and $S_{k,l}(a,b)=0$ for all $l\in \mathbb{N}_{[k-e_{a,b}(k)+1,k]}$. 
\item Starting from the results of this technical note, the work \cite{abara2018quadratic} has recently characterized convexity for the case of scheduled intermittent observations. See \cite[Lemma~1]{abara2018quadratic} for the appropriate choice of matrices $S_{k,j}$ in terms of time-varying sensing and communication topologies.
\item Random information structures as considered in \cite{ouyang2017stochastic} can be encoded by letting  $S_{k,j}$ be  taken at random from a pre-specified set of possible sparsities. 
\end{enumerate}

	We are now ready to state the optimization problem under study given an arbitrary information structure.
	\begin{alignat*}{3}
	&\text{\textbf{Problem 1}}&&~\\
	 &\minimize&&J(x_0,\cdots,x_N,u_0,\cdots,u_{N-1})\\ 
	 &\st &&(\ref{eq:system}),(\ref{eq:constraints_state_input}) \quad \forall w_k\in \mathcal{W}\,,\\
	&~&& (\ref{eq:affine_feedback}),(\ref{eq:affine_feedback_constraints}) \quad  \forall k \in \mathbb{N}_{[0,N-1]}\,, \forall j \in \mathbb{N}_{[0,k]}.
		 	\end{alignat*}
	In the problem above, the decision variables are the output-feedback matrices $L_{k,j}$ and the feed-forward vectors $g_k$  as in (\ref{eq:affine_feedback}) for all $k \in \mathbb{N}_{[0,N-1]}$ and $j \in \mathbb{N}_{[0,k]}$. As we optimize over the matrices defining the closed-loop policy (\ref{eq:affine_feedback}), we say that we perform closed-loop predictions. Without any knowledge about the statistics of the realization of $w_k$ at each time in the set $\mathcal{W}$, we consider $J(\cdot)$ to be a function of the disturbance-free state and input trajectories. For computational tractability, we assume that $J(\cdot)$ is a convex function and for simplicity we assume that the sets $\Gamma,~\mathcal{X}_f$ are polytopes  $\Gamma=\{(x,u) \in (\mathbb{R}^{n}, \mathbb{R}^m)\text{ s.t. }Ux+Vu\leq b\}$, where $U \in \mathbb{R}^{s\times n}$, $V \in \mathbb{R}^{s \times m}$ and $b \in \mathbb{R}^s$, and  $\mathcal{X}_f=\{x \in \mathbb{R}^{n}\text{ s.t. }Rx \leq z\}$,
where $R \in \mathbb{R}^{r \times n}$ and	$z \in \mathbb{R}^r$.  Despite these assumptions, it was shown in \cite{Goulart} that Problem~1 is non-convex in general. This is due to the nonlinear propagation of the state feedback matrices through the linear model (\ref{eq:system}).

		\balance

\section{Convexity For Arbitrary Information Structures}
\label{se:convexity}
Our goal is then to find a convex  program which is  equivalent to the intractable Problem~1.  It is convenient to define the vectors of stacked variables as
	\begin{equation*}
	\begin{aligned}
	&\mathbf{x}=\begin{bmatrix}x_0^\mathsf{T}&\cdots&x_N^\mathsf{T}\end{bmatrix}^{\mathsf{T}} \in \mathbb{R}^{n(N+1)}\,,\\
	&\mathbf{y}=\begin{bmatrix}y_0^\mathsf{T}&\cdots&y_{N}^\mathsf{T}\end{bmatrix}^\mathsf{T} \in \mathbb{R}^{p(N+1)}\,,\\
	 &\mathbf{u}=\begin{bmatrix}u_0^\mathsf{T}&\cdots&u_{N-1}^\mathsf{T}&0_{m \times 1}^\mathsf{T}\end{bmatrix}^\mathsf{T}\in \mathbb{R}^{m(N+1)}\,,\\
	&\mathbf{w}=\begin{bmatrix}w_0^\mathsf{T}&\cdots&w_{N-1}^{\mathsf{T}}&0_{n \times 1}^\mathsf{T}\end{bmatrix}^{\mathsf{T}}\in \mathbb{R}^{n(N+1)}\,.\\
	\end{aligned}
	\end{equation*}
	Equation (\ref{eq:system}) can be succinctly expressed as
	\begin{equation}
	\label{eq:states_stacked}
	\begin{aligned}
	&\mathbf{x}=\mathbf{A}x_0+\mathbf{Bu}+\mathbf{E}_D\mathbf{w}\,, \quad \mathbf{y}=\mathbf{Cx+Hw}\,,\\
	\end{aligned}\
	\end{equation}
where matrices $\mathbf{A}$, $\mathbf{B}$, $\mathbf{E}_D$, $\mathbf{C}$ and $\mathbf{H}$ are defined in Appendix~\ref{app:notation}. Their derivation is straightforward from the recursive application of (\ref{eq:system}).
	Similarly,  considering (\ref{eq:affine_feedback}), the control input can be expressed as	$\mathbf{u}=\mathbf{Ly+g}$,	where $\mathbf{L}\in \mathbb{R}^{m(N+1)\times p(N+1)}$ and $\mathbf{g}\in \mathbb{R}^{m(N+1)}$ are defined in Appendix~\ref{app:notation}.  In order to satisfy (\ref{eq:affine_feedback_constraints}) matrix $\mathbf{L}$ must lie in a subspace $\mathbfcal{S} \subseteq \mathbb{R}^{m(N+1) \times p(N+1)}$, where $\mathbfcal{S}=\text{Sparse}(\mathbf{S})$ and $\mathbf{S}$ stacks the matrices $S_{k,j}$'s as per (\ref{eq:L}) in Appendix~\ref{app:notation}.	
	
	\subsection{Disturbance-feedback parametrization}
It is known that when an information structure is not enforced, parametrizing the controller as a disturbance-feedback affine policy restores tractability of Problem~1 \cite{Goulart}.  Letting $\mathbf{P=C}\mathbf{E}_D+\mathbf{H}$ we define the disturbance-feedback controller
\begin{equation}
\label{eq:dist_feedback}
 \mathbf{u}=\mathbf{QPw}+\mathbf{v}\,.
 \end{equation}
 The decision variable  $\mathbf{Q}\in \mathbb{R}^{m(N+1)\times p(N+1)}$  is causal as per (\ref{eq:L}).
	It is easy to verify that we can map a disturbance-feedback controller $(\mathbf{Q,v})$ to the unique corresponding output-feedback controller $(\mathbf{L,g})$ and vice versa as follows.
\begin{align}
&\mathbf{L}=\mathbf{Q}(\mathbf{CBQ}+I_{p(N+1)})^{-1}\,,\label{eq:map_Q_to_L}\\
&\mathbf{g}=\mathbf{v}-\mathbf{Q}(\mathbf{CBQ}+I_{p(N+1)})^{-1}(\mathbf{CBv+CA}x_0)\,,\nonumber\\
&\mathbf{Q}=\mathbf{L}(I_{p(N+1)}-\mathbf{CBL})^{-1}\,, \label{eq:map_L_to_Q}\\
&\mathbf{v}=\mathbf{L}(I_{p(N+1)}-\mathbf{CBL})^{-1}(\mathbf{CBg+CA}x_0)+\mathbf{g}\,.\nonumber
\end{align}
 Motivated by \cite{rotkowitz2005tractable,rotkowitz2006characterization,rotkowitz2008information, QIconvexity} we introduce the closed-loop map operator for systems in finite-horizon.
\begin{definition}[Closed-loop map]
Let $\mathbf{X} \in \mathbb{R}^{m(N+1)\times p(N+1)}$ and $\mathbf{Y} \in \mathbb{R}^{p(N+1)\times m(N+1)}$. The closed-loop map function $h:\mathbb{R}^{m(N+1)\times p(N+1)}\times \mathbb{R}^{p(N+1)\times m(N+1)}\longrightarrow ~\mathbb{R}^{m(N+1) \times p(N+1)}$ is defined such that $h(\mathbf{X},\mathbf{Y})=-\mathbf{X}(I_{p(N+1)}-\mathbf{YX})^{-1}$.
\end{definition}

With reference to mappings (\ref{eq:map_Q_to_L}) and (\ref{eq:map_L_to_Q}), notice that the operator $h(\cdot)$ maps an output-feedback controller $\mathbf{L}$ to the corresponding disturbance-feedback controller $-\mathbf{Q}$. In particular, the mappings (\ref{eq:map_Q_to_L}) and (\ref{eq:map_L_to_Q}) can be expressed as $\mathbf{L}=h(\mathbf{-Q},\mathbf{CB})$ and $\mathbf{Q}=-h(\mathbf{L},\mathbf{CB})$ respectively.
It is easy to show that when $J(\cdot)$ is convex in the disturbance-free states and inputs, then it does not depend on $\mathbf{Q}$ and it is convex in $\mathbf{v}$. Similar to \cite{Goulart}, the state and input constraints (\ref{eq:constraints_state_input}) are affine in ($\mathbf{Q,v}$) and  can be expressed as
\begin{equation}
\label{eq:affine_constraints}
F\mathbf{v}+\max_{\mathbf{w}\in \mathcal{W}^{N +1}}(F\mathbf{QP}+G)\mathbf{w}\leq c\,,
\end{equation}
 where $F\in \mathbb{R}^{(Ns+r) \times m(N+1)},~G \in \mathbb{R}^{(Ns+r) \times n(N+1)},~c \in \mathbb{R}^{Ns+r}$ are reported in Appendix~\ref{app:notation} for completeness, and $\max_{\mathbf{w}\in \mathcal{W}^{N +1}}(\cdot)$ denotes row-wise maximization. We remark that no assumption on the convexity of the set of disturbances $\mathcal{W}$ is needed for the set of disturbance-feedback controllers $(\mathbf{Q},\mathbf{v})$ satisfying  (\ref{eq:affine_constraints}) to be convex \cite{Goulart}. Nonetheless, a non-convex set of disturbances $\mathcal{W}$ can be substituted by its convex hull without loss of generality or performance \cite[Example 7.1.2.]{bertsekas2003convex}. In this case, computing a policy that satisfies (\ref{eq:affine_constraints}) is easily done by formulating the dual of the maximization problems corresponding to each row of $\max_{\mathbf{w}\in \mathcal{W}^{N +1}}(F\mathbf{QP}+G)\mathbf{w}$. We refer the reader to \cite[Section 4.2]{Goulart} for the specific cases of polytopic and norm-bounded disturbances.
 
 {
 By the discussion above, we conclude that Problem~1 can be equivalently formulated as the following optimization problem:
  \begin{alignat}{3}
	&\textbf{Problem~2} \nonumber\\
	 &\minimize_{\mathbf{Q,v}}&&J(x_0,\mathbf{v}) \nonumber \\ 
	 &\st &&F\mathbf{v}+\max_{\mathbf{w}\in \mathcal{W}^{N +1}}(F\mathbf{QP}+G)\mathbf{w}\leq c\,,\nonumber\\
	 &~&& \mathbf{Q} \in -h(\mathbfcal{S},\mathbf{CB}) \,. \label{eq:sparsity_nonlinear}
	 \end{alignat}
	The optimal solution of Problem~2 can be translated into the optimal solution of Problem~1, and vice-versa, through the mappings (\ref{eq:map_Q_to_L}), (\ref{eq:map_L_to_Q}). However, Problem~2 is still an intractable problem due to  $h(\mathbfcal{S},\mathbf{CB})$ being a non-convex set in general.
	}
	
\subsection{Quadratic invariance for convexity given an arbitrary information structure}	
	If the set $h(\mathbfcal{S},\mathbf{CB})$ is convex, then the resulting constraints on $\mathbf{Q}$ are convex and Problem~2 is a convex program equivalent to Problem~1. Hence, the question arises as to when $h(\mathbfcal{S},\mathbf{CB})$ is a convex set.
%

\begin{lemma}
\label{le:2}
The set $h(\mathbfcal{S},\mathbf{CB})$ is convex if and only if it is equal to $\mathbfcal{S}$.
\end{lemma}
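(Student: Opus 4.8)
The plan is to exploit the structure of the closed-loop map $h$, in particular the fact that it is an involution in an appropriate sense. First I would establish the key algebraic identity $h(h(\mathbf{X},\mathbf{Y}),\mathbf{Y}) = \mathbf{X}$ for any matrices $\mathbf{X},\mathbf{Y}$ of compatible dimensions such that the relevant inverses exist: expanding $h(h(\mathbf{X},\mathbf{Y}),\mathbf{Y}) = h(-\mathbf{X}(I-\mathbf{YX})^{-1},\mathbf{Y})$ and simplifying, using the identity $(I - \mathbf{YX})^{-1}\mathbf{Y} = \mathbf{Y}(I-\mathbf{XY})^{-1}$ and the push-through identity, should collapse the nested expression back to $\mathbf{X}$. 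This tells us that $h(\cdot,\mathbf{CB})$ is a bijection from the space of $m(N+1)\times p(N+1)$ matrices to itself, with inverse equal to itself (up to the sign conventions already recorded in the excerpt via $\mathbf{L} = h(-\mathbf{Q},\mathbf{CB})$ and $\mathbf{Q} = -h(\mathbf{L},\mathbf{CB})$).

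For the ``if'' direction the statement is immediate: if $h(\mathbfcal{S},\mathbf{CB}) = \mathbfcal{S}$ then it is a linear subspace, hence convex. For the ``only if'' direction, suppose $h(\mathbfcal{S},\mathbf{CB})$ is convex; I want to conclude it equals $\mathbfcal{S}$. The crucial observation is that $\mathbf{0} \in \mathbfcal{S}$ and $h(\mathbf{0},\mathbf{CB}) = \mathbf{0}$, so $\mathbf{0} \in h(\mathbfcal{S},\mathbf{CB})$. Next, for scaling: for any $\mathbf{L}\in\mathbfcal{S}$ and any scalar $\alpha$, $\alpha\mathbf{L}\in\mathbfcal{S}$ (since $\mathbfcal{S}$ is a subspace), so the entire curve $\{h(\alpha\mathbf{L},\mathbf{CB}) : \alpha \in \mathbb{R}\}$ lies in $h(\mathbfcal{S},\mathbf{CB})$. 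A convex set containing a curve that passes through $\mathbf{0}$ with a well-defined structure — specifically, I would compute $\frac{d}{d\alpha} h(\alpha\mathbf{L},\mathbf{CB})\big|_{\alpha=0} = -\mathbf{L}$ and argue that the chord/limit structure forces the tangent line $\{-t\mathbf{L} : t\in\mathbb{R}\}$ (or at least the ray) into the convex set. More robustly: since $h(\mathbfcal{S},\mathbf{CB})$ is convex and contains $h(\alpha\mathbf{L},\mathbf{CB})$ for all $\alpha$, taking the limit of difference quotients $\frac{1}{\alpha}(h(\alpha\mathbf{L},\mathbf{CB}) - \mathbf{0})$ as $\alpha\to 0$ and using that convex sets are closed under such limiting directions (cone of feasible directions at an interior-relative point), I get $-\mathbf{L}\in \overline{h(\mathbfcal{S},\mathbf{CB})}$, and then I need to upgrade this to membership in $h(\mathbfcal{S},\mathbf{CB})$ itself. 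This gives $\mathbf{L}\in\mathbfcal{S}\Rightarrow -\mathbf{L}\in h(\mathbfcal{S},\mathbf{CB})$, i.e. $\mathbfcal{S}\subseteq -h(\mathbfcal{S},\mathbf{CB})$, hence by the involution $-h(\mathbfcal{S},\mathbf{CB})\supseteq\mathbfcal{S}$ implies (applying $-h(\cdot,\mathbf{CB})$) $h(\mathbfcal{S},\mathbf{CB})\subseteq\mathbfcal{S}$. Finally, applying the same argument to the set $h(\mathbfcal{S},\mathbf{CB})$ in place of $\mathbfcal{S}$ — which is legitimate once we know it is a convex set that we have just shown is contained in the subspace $\mathbfcal{S}$, and which contains $h(h(\mathbfcal{S},\mathbf{CB}),\mathbf{CB}) = \mathbfcal{S}$ by the involution — yields the reverse inclusion $\mathbfcal{S}\subseteq h(\mathbfcal{S},\mathbf{CB})$, and the two inclusions give equality.

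The main obstacle I anticipate is the limiting/closure argument: convexity alone gives me that $h(\mathbfcal{S},\mathbf{CB})$ is closed under convex combinations, but to extract the linear subspace $\mathbfcal{S}$ I need to pass from the nonlinear curves $\alpha\mapsto h(\alpha\mathbf{L},\mathbf{CB})$ to their tangent directions at the origin and argue these lie in the set, not merely its closure. The cleanest route is probably to avoid derivatives entirely and instead show directly that $h(\mathbfcal{S},\mathbf{CB})$ is itself a subspace whenever it is convex: since $\mathbf{0}\in h(\mathbfcal{S},\mathbf{CB})$, convexity plus the homogeneity-like property $h(\alpha\mathbf{L},\mathbf{CB})\in h(\mathbfcal{S},\mathbf{CB})$ for all $\alpha$ should, via a careful chord argument taking $\alpha$ large, force each line $\mathbb{R}\cdot\mathbf{M}$ through a point $\mathbf{M}=h(\mathbf{L},\mathbf{CB})$ near $\mathbf{0}$ to lie in the set — and I expect the analysis of the rational-function behavior of $h(\alpha\mathbf{L},\mathbf{CB})$ in $\alpha$ (it has entries that are rational in $\alpha$, with a common denominator $\det(I-\alpha\,\mathbf{CB}\,\mathbf{L})$) to be where the real work lies. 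Once the set is known to be a subspace containing $\mathbf{0}$ and "aligned" with $\mathbfcal{S}$ through the involution, matching dimensions (both $\mathbfcal{S}$ and $h(\mathbfcal{S},\mathbf{CB})$ being images of each other under a bijection, hence of equal dimension, with one contained in the other) closes the argument.
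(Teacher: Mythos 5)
Your ``if'' direction and your algebraic groundwork are fine: the identity $h(h(\mathbf{X},\mathbf{Y}),\mathbf{Y})=\mathbf{X}$ does hold (and in this finite-horizon setting $h(\cdot,\mathbf{CB})$ is even a polynomial involution of the causal subspace, since $\mathbf{CBL}$ is nilpotent), and you are right that a single inclusion $\mathbfcal{S}\subseteq h(\mathbfcal{S},\mathbf{CB})$ would finish the job, because applying the set map $h(\cdot,\mathbf{CB})$ to both sides and using the involution immediately gives $h(\mathbfcal{S},\mathbf{CB})\subseteq h(h(\mathbfcal{S},\mathbf{CB}),\mathbf{CB})=\mathbfcal{S}$; no dimension-matching or re-running of the argument on $h(\mathbfcal{S},\mathbf{CB})$ is needed (and the sign in ``$-h(\mathbfcal{S},\mathbf{CB})$'' is immaterial since $\mathbfcal{S}=-\mathbfcal{S}$).

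The genuine gap is in the only step that carries the weight of the lemma: deducing $\mathbfcal{S}\subseteq h(\mathbfcal{S},\mathbf{CB})$ from convexity. Your tangent argument does not do this. Knowing that the curve $\alpha\mapsto h(\alpha\mathbf{L},\mathbf{CB})$ lies in the convex set $C:=h(\mathbfcal{S},\mathbf{CB})$ and passes through $0$ with derivative $-\mathbf{L}$ only places $-\mathbf{L}$ in the tangent cone of $C$ at $0$: the difference quotients $\tfrac{1}{\alpha}h(\alpha\mathbf{L},\mathbf{CB})$ are dilations of points of $C$ by factors larger than one, so they need not belong to $C$, and ``convex sets are closed under such limiting directions'' is false as a statement about membership in the set (for a bounded convex set containing the origin the tangent cone at an interior point is the whole space). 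Closedness of $C$ -- which in fact holds here, since $h(\cdot,\mathbf{CB})$ is a homeomorphism of the causal subspace, so the ``upgrade from the closure'' you worry about is the lesser issue -- does not repair this. You flag the obstacle and gesture at two fixes (a chord argument for large $\alpha$, an analysis of the polynomial/rational dependence on $\alpha$), but neither is carried out, and that is exactly where the substance lies. The paper proves the lemma by adapting \cite[Theorem~5]{QIconvexity} to finite-dimensional operators; a self-contained finite-horizon version of that argument expands $h(\alpha\mathbf{L},\mathbf{CB})=-\alpha\mathbf{L}-\alpha^2\mathbf{L}(\mathbf{CBL})-\cdots$ (a terminating sum), takes convex combinations of such points built from different elements of $\mathbfcal{S}$, pulls them back into $\mathbfcal{S}$ through the involution, and uses that $\mathbfcal{S}$ is a closed subspace to extract from the quadratic coefficients the condition $\mathbf{L}\mathbf{CB}\mathbf{L}'\in\mathbfcal{S}$, i.e.\ quadratic invariance, from which equality with $\mathbfcal{S}$ follows (cf.\ Lemma~\ref{le:1}). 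Without some such coefficient-extraction mechanism, the ``only if'' direction of your proposal remains unproven.
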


Lemma~\ref{le:2} can be proved similar to  \cite[Theorem~5]{QIconvexity} by using finite dimensional operators, subspaces and convex sets whereas \cite{QIconvexity} uses Banach spaces over $\mathbb{R}$,  double-cones and star sets.
Lemma~\ref{le:2} motivates looking for conditions so that $h(\mathbfcal{S},\mathbf{CB})=\mathbfcal{S}$.  Such conditions are based on adapting the notion of quadratic invariance \cite{rotkowitz2006characterization} to the considered setting.
\begin{definition}
\label{de:QI}
    The set $\mathbfcal{S}=\text{Sparse}(\mathbf{S})$ is quadratically invariant (QI) with respect to $\mathbf{CB}$ if and only if
    \begin{equation}
    \label{eq:QI_infinite}
    \mathbf{LCBL'} \in \mathbfcal{S}, \quad \forall \mathbf{L,L'} \in \mathbfcal{S}\,.
    \end{equation}
\end{definition}
Based on the same proof technique as per  \cite[Theorem~14]{rotkowitz2006characterization} the following result holds.
\begin{lemma}
\label{le:1}
The sets $ h(\mathbfcal{S},\mathbf{CB})$ and $\mathbfcal{S}$ are equivalent if and only if $\mathbfcal{S}$ is QI with respect to $\mathbf{CB}$.
\end{lemma}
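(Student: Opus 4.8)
The plan is to prove the equivalence in Lemma~\ref{le:1} by establishing both implications directly from the definitions, mirroring the classical argument of \cite[Theorem~14]{rotkowitz2006characterization} but working entirely with finite matrices. First I would recall that $h(\mathbf{L},\mathbf{CB}) = -\mathbf{L}(I-\mathbf{CB}\mathbf{L})^{-1}$, and note the key algebraic identity underlying the whole proof: the Neumann-type expansion $(I-\mathbf{CB}\mathbf{L})^{-1} = \sum_{j=0}^{\infty}(\mathbf{CB}\mathbf{L})^j$, which here is actually a \emph{finite} sum because $\mathbf{CB}$ is strictly block-lower-triangular (it encodes a strictly causal map) and hence $\mathbf{CB}\mathbf{L}$ is nilpotent of index at most $N+1$. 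This is the place where the finite-horizon setting makes life easier than the Banach-space version: no convergence argument is needed, just nilpotency, so $h(\mathbf{L},\mathbf{CB}) = -\sum_{j=0}^{N}\mathbf{L}(\mathbf{CB}\mathbf{L})^j$ is a polynomial in $\mathbf{L}$ and $\mathbf{CB}$.

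For the direction ``$\mathbfcal{S}$ QI $\Rightarrow h(\mathbfcal{S},\mathbf{CB})=\mathbfcal{S}$'': I would show $h(\mathbfcal{S},\mathbf{CB})\subseteq\mathbfcal{S}$ and $\mathbfcal{S}\subseteq h(\mathbfcal{S},\mathbf{CB})$. For the first inclusion, take $\mathbf{L}\in\mathbfcal{S}$; since $\mathbfcal{S}$ is a subspace and QI means $\mathbf{L}\mathbf{CB}\mathbf{L}'\in\mathbfcal{S}$ for all $\mathbf{L},\mathbf{L}'\in\mathbfcal{S}$, an easy induction on $j$ gives $\mathbf{L}(\mathbf{CB}\mathbf{L})^j\in\mathbfcal{S}$ for every $j\ge 0$; summing the finite expansion and using that $\mathbfcal{S}$ is closed under linear combinations yields $h(\mathbf{L},\mathbf{CB})\in\mathbfcal{S}$. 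For the reverse inclusion one uses that $h(\cdot,\mathbf{CB})$ is an involution up to sign — concretely the mapping pair (\ref{eq:map_Q_to_L})--(\ref{eq:map_L_to_Q}) shows $\mathbf{Q}=-h(\mathbf{L},\mathbf{CB})$ and $\mathbf{L}=h(-\mathbf{Q},\mathbf{CB})$, so $h(-h(\mathbf{L},\mathbf{CB}),\mathbf{CB})=\mathbf{L}$; hence if $\mathbf{Q}=-h(\mathbf{L},\mathbf{CB})\in\mathbfcal{S}$ then applying $h(\cdot,\mathbf{CB})$ recovers $\mathbf{L}$, so every element of $\mathbfcal{S}$ is the $h$-image of an element of $\mathbfcal{S}$, giving $\mathbfcal{S}\subseteq h(\mathbfcal{S},\mathbf{CB})$. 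Combining, $h(\mathbfcal{S},\mathbf{CB})=\mathbfcal{S}$.

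For the converse ``$h(\mathbfcal{S},\mathbf{CB})=\mathbfcal{S}\Rightarrow\mathbfcal{S}$ QI'': suppose the sets are equal and take arbitrary $\mathbf{L},\mathbf{L}'\in\mathbfcal{S}$. The trick is to differentiate, or equivalently to exploit that $\mathbfcal{S}$ is a subspace so $t\mathbf{L}\in\mathbfcal{S}$ for all scalars $t$, whence $h(t\mathbf{L},\mathbf{CB}) = -\sum_{j=0}^{N}t^{j+1}\mathbf{L}(\mathbf{CB}\mathbf{L})^j\in\mathbfcal{S}$ for all $t$. Since $\mathbfcal{S}$ is a closed subspace and this is a polynomial in $t$ with matrix coefficients lying (in aggregate) in $\mathbfcal{S}$ for infinitely many $t$, each coefficient must lie in $\mathbfcal{S}$; in particular the $t^2$ coefficient $\mathbf{L}\mathbf{CB}\mathbf{L}\in\mathbfcal{S}$. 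To get the mixed term $\mathbf{L}\mathbf{CB}\mathbf{L}'$, apply the same to $\mathbf{L}+\mathbf{L}'\in\mathbfcal{S}$: the $t^2$ coefficient of $h(t(\mathbf{L}+\mathbf{L}'),\mathbf{CB})$ is $(\mathbf{L}+\mathbf{L}')\mathbf{CB}(\mathbf{L}+\mathbf{L}')=\mathbf{L}\mathbf{CB}\mathbf{L}+\mathbf{L}\mathbf{CB}\mathbf{L}'+\mathbf{L}'\mathbf{CB}\mathbf{L}+\mathbf{L}'\mathbf{CB}\mathbf{L}'$, and subtracting the pure terms (already in $\mathbfcal{S}$) leaves $\mathbf{L}\mathbf{CB}\mathbf{L}'+\mathbf{L}'\mathbf{CB}\mathbf{L}\in\mathbfcal{S}$; a further polarization using $t\mathbf{L}$ and $\mathbf{L}'$ (so that $t^2$ and $t$ coefficients separate the two summands) isolates $\mathbf{L}\mathbf{CB}\mathbf{L}'\in\mathbfcal{S}$, which is exactly (\ref{eq:QI_infinite}).

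The main obstacle I anticipate is the careful bookkeeping in the converse direction: one must be sure that ``$h(t\mathbf{L},\mathbf{CB})\in\mathbfcal{S}$ for all $t$'' genuinely forces \emph{each} matrix coefficient of the $t$-polynomial into $\mathbfcal{S}$ (this is fine because $\mathbfcal{S}$ is a linear subspace, so evaluating at $N+2$ distinct values of $t$ and inverting a Vandermonde system expresses each coefficient as a linear combination of elements of $\mathbfcal{S}$), and that the polarization identities correctly peel off the cross term rather than only its symmetrization. Everything else — the nilpotency of $\mathbf{CB}\mathbf{L}$, the finiteness of the Neumann expansion, the involution property of $h$ — is immediate in this finite-dimensional, strictly-causal setting, which is precisely why the proof technique of \cite[Theorem~14]{rotkowitz2006characterization} transfers verbatim.
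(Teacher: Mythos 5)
Your overall route is the one the paper itself points to (the Rotkowitz--Lall argument, with the Neumann series becoming a finite sum because $\mathbf{CB}$ is strictly block-lower-triangular and hence $\mathbf{CBL}$ is nilpotent), and the direction ``QI $\Rightarrow h(\mathbfcal{S},\mathbf{CB})=\mathbfcal{S}$'' is correct: the induction giving $\mathbf{L}(\mathbf{CBL})^j\in\mathbfcal{S}$ and the involution argument for the reverse inclusion both work. (Minor slip: the involution is $h(h(\mathbf{L},\mathbf{CB}),\mathbf{CB})=\mathbf{L}$, not $h(-h(\mathbf{L},\mathbf{CB}),\mathbf{CB})=\mathbf{L}$; this is harmless since $\mathbfcal{S}$ is closed under negation.)

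The genuine gap is the last step of the converse. From $h(t\mathbf{M},\mathbf{CB})\in\mathbfcal{S}$ for all $t$ and the Vandermonde argument you legitimately obtain $\mathbf{M}\mathbf{CB}\mathbf{M}\in\mathbfcal{S}$ for every $\mathbf{M}\in\mathbfcal{S}$, and polarizing with $\mathbf{M}=\mathbf{L}+\mathbf{L}'$ gives $\mathbf{L}\mathbf{CB}\mathbf{L}'+\mathbf{L}'\mathbf{CB}\mathbf{L}\in\mathbfcal{S}$. But the claimed ``further polarization using $t\mathbf{L}$ and $\mathbf{L}'$'' cannot isolate $\mathbf{L}\mathbf{CB}\mathbf{L}'$: in $(t\mathbf{L}+\mathbf{L}')\mathbf{CB}(t\mathbf{L}+\mathbf{L}')$ both cross terms carry the same factor $t$, and more fundamentally a quadratic map determines only the symmetric part of its associated bilinear map, so no scalar bookkeeping alone yields the ordered product required by Definition~2. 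To close the gap you must use that $\mathbfcal{S}=\text{Sparse}(\mathbf{S})$ is a coordinate subspace. For instance, fix an entry $(a,b)$ with $\mathbf{S}(a,b)=0$, let $\tilde{\mathbf{L}}$ agree with $\mathbf{L}$ on row $a$ and vanish elsewhere, and let $\tilde{\mathbf{L}}'$ agree with $\mathbf{L}'$ on column $b$ and vanish elsewhere; both lie in $\mathbfcal{S}$, and $(\mathbf{L}\mathbf{CB}\mathbf{L}')(a,b)=(\tilde{\mathbf{L}}\mathbf{CB}\tilde{\mathbf{L}}')(a,b)$ since this entry depends only on row $a$ of the left factor and column $b$ of the right factor. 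Moreover row $a$ of $\tilde{\mathbf{L}}'$ is identically zero (its only candidate nonzero entry is $(a,b)$, which vanishes because $\mathbf{S}(a,b)=0$), so $(\tilde{\mathbf{L}}'\mathbf{CB}\tilde{\mathbf{L}})(a,b)=0$; applying your symmetrized condition to the pair $(\tilde{\mathbf{L}},\tilde{\mathbf{L}}')$ then forces $(\mathbf{L}\mathbf{CB}\mathbf{L}')(a,b)=0$, which is the bilinear QI condition. Without an argument of this kind your proof establishes only the symmetrized invariance, not QI as defined in Definition~2.
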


One of the key insights into proving Lemma~\ref{le:1} is the power series expansion of the closed-loop map $h(\mathbf{X,Y})$. Notice that in the infinite dimensional setting of \cite{rotkowitz2006characterization} additional assumptions for plant and controller are needed to ensure  $\|\mathbf{XY}\|<1$ and convergence of the power series \cite[Section III-A]{rotkowitz2006characterization}, whereas in the finite dimensional setting considered here this condition is satisfied by construction.   Indeed, the diagonal of matrix $\mathbf{LCB}$ is null for any dynamical system and affine controller. The above discussion leads to the following result. 
\begin{proposition}
\label{pr:QI_general}
 Problem~2 is a convex program equivalent to Problem~1 if and only if $\mathbfcal{S}$ is QI with respect to $\mathbf{CB}$.%
\end{proposition}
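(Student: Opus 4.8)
The plan is to chain together the results already established in the excerpt. Proposition~\ref{pr:QI_general} is essentially a corollary of Lemma~\ref{le:2} and Lemma~\ref{le:1} once we recall \emph{why} Problem~2 is the relevant reformulation of Problem~1. First I would recall the reduction already made in the text: through the bijective mappings (\ref{eq:map_Q_to_L})--(\ref{eq:map_L_to_Q}) between output-feedback controllers $(\mathbf{L},\mathbf{g})$ and disturbance-feedback controllers $(\mathbf{Q},\mathbf{v})$, Problem~1 and Problem~2 have the same optimal value and their optimizers are in one-to-one correspondence, and the feasible set of Problem~2 other than the sparsity constraint (the objective $J(x_0,\mathbf{v})$ plus the robust affine constraints (\ref{eq:affine_constraints})) is convex in $(\mathbf{Q},\mathbf{v})$. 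Hence Problem~2 is a convex program if and only if the only non-convex ingredient, namely the constraint set $\{\mathbf{Q}\in -h(\mathbfcal{S},\mathbf{CB})\}$, is convex; and since $-h(\mathbfcal{S},\mathbf{CB})$ is convex exactly when $h(\mathbfcal{S},\mathbf{CB})$ is convex (negation is an affine bijection), the question reduces to convexity of $h(\mathbfcal{S},\mathbf{CB})$.

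Next I would invoke Lemma~\ref{le:2}: $h(\mathbfcal{S},\mathbf{CB})$ is convex if and only if $h(\mathbfcal{S},\mathbf{CB})=\mathbfcal{S}$. Then Lemma~\ref{le:1} gives $h(\mathbfcal{S},\mathbf{CB})=\mathbfcal{S}$ if and only if $\mathbfcal{S}$ is QI with respect to $\mathbf{CB}$. Combining these, $h(\mathbfcal{S},\mathbf{CB})$ is convex if and only if $\mathbfcal{S}$ is QI with respect to $\mathbf{CB}$, which by the previous paragraph is equivalent to Problem~2 being a convex program. To finish, I would note that when QI holds, $\mathbf{Q}\in-h(\mathbfcal{S},\mathbf{CB})=-\mathbfcal{S}=\mathbfcal{S}$ (the subspace $\mathbfcal{S}$ is closed under negation), so the constraint (\ref{eq:sparsity_nonlinear}) becomes the linear sparsity constraint $\mathbf{Q}\in\mathbfcal{S}$ and Problem~2 is a genuinely tractable convex program; this also confirms the ``equivalent to Problem~1'' part via the controller mappings, which are well-defined because $(\mathbf{CBL}+I)$ and $(I-\mathbf{CBL})$ are always invertible (the diagonal of $\mathbf{CBL}$, equivalently of $\mathbf{LCB}$, is zero by strict causality, so these matrices are unipotent).

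There is no serious obstacle here: the real content has been pushed into Lemmas~\ref{le:2} and~\ref{le:1} and into the parametrization results from \cite{Goulart} that precede the statement. The only points requiring a small amount of care are (i) making explicit that negating a set preserves convexity and that $\mathbfcal{S}=-\mathbfcal{S}$ since $\mathbfcal{S}$ is a linear subspace, and (ii) being precise that ``Problem~2 is a convex program'' means the feasible region is a convex set given that the objective and the constraints (\ref{eq:affine_constraints}) are already known to be convex, so that the equivalence genuinely turns on the sparsity constraint alone. I would therefore keep the proof short, essentially a three-line deduction: Problem~2 convex $\iff$ $h(\mathbfcal{S},\mathbf{CB})$ convex (by the structure of Problem~2 and \cite{Goulart}) $\iff$ $h(\mathbfcal{S},\mathbf{CB})=\mathbfcal{S}$ (Lemma~\ref{le:2}) $\iff$ $\mathbfcal{S}$ QI w.r.t.\ $\mathbf{CB}$ (Lemma~\ref{le:1}), together with the remark that the equivalence between the two problems always holds through (\ref{eq:map_Q_to_L})--(\ref{eq:map_L_to_Q}) irrespective of convexity.
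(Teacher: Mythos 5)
Your proposal is correct and follows essentially the same route as the paper: equivalence of Problems~1 and~2 via the parametrization mappings (\ref{eq:map_Q_to_L})--(\ref{eq:map_L_to_Q}), and the convexity characterization by chaining Lemma~\ref{le:2} with Lemma~\ref{le:1}. The extra care you take (the sign/negation of $h(\mathbfcal{S},\mathbf{CB})$, $\mathbfcal{S}=-\mathbfcal{S}$, and invertibility of $I-\mathbf{CBL}$ by strict causality) is consistent with remarks the paper makes outside its terse proof.
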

\begin{proof}
\emph{
Problem~$1$ and Problem~$2$ are equivalent upon parametrizing the control input as an output-feedback policy (\ref{eq:affine_feedback}) or a disturbance-feedback policy (\ref{eq:dist_feedback}) respectively.  By combining Lemma~\ref{le:1} and Lemma~\ref{le:2} above we have that $h(\mathbfcal{S},\mathbf{CB})$ is convex if and only if $\mathbfcal{S}$ is QI with respect to $\mathbf{CB}$.  \QEDA}
\end{proof}
When $h(\mathbfcal{S},\mathbf{CB})$ is convex, it is equal to $\mathbfcal{S}$ by Lemma~\ref{le:1}. Hence, given QI one can simply substitute (\ref{eq:sparsity_nonlinear}) with $\mathbf{Q} \in \mathbfcal{S}$ and obtain a finite dimensional convex program, efficiently solvable with standard convex optimization techniques. The optimal solution $(\mathbf{Q}^\star,\mathbf{v}^\star)$ is translated to the globally optimal affine output-feedback solution of Problem~1  $(\mathbf{L}^\star,\mathbf{g}^\star)$ through the mapping (\ref{eq:map_Q_to_L}).

Notice that QI as per (\ref{eq:QI_infinite}) cannot be tested in practice, as the subspace $\mathbfcal{S}$ contains infinite elements. Our main result is to translate convexity conditions into finitely many inequalities over binary matrices for any arbitrary information structure in finite-horizon.
\begin{theorem}
\label{th:characterization_general}
Let $\Delta_g=\text{Struct}(CA^gB)$. Then, $\mathbfcal{S}$ is QI with respect to $\mathbf{CB}$ if and only if
\begin{equation}
\label{eq:binary_small}
S_{k,h}\Delta_g S_{h-g-1,j} \leq S_{k,j}\,, 
\end{equation}
for all $k \in \mathbb{N}_{[1,N-1]}$, $j \in \mathbb{N}_{[0,k-1]}$, $h \in \mathbb{N}_{[j+1,k]}$ and $g \in \mathbb{N}_{[0,h-j-1]}$.

\end{theorem}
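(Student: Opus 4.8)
The plan is to translate the infinite-dimensional quadratic invariance condition $\mathbf{LCBL'}\in\mathbfcal{S}$ for all $\mathbf{L},\mathbf{L'}\in\mathbfcal{S}$ into a condition on the binary structure matrices $S_{k,j}$, exploiting the block structure of $\mathbf{L}$, $\mathbf{CB}$, and $\mathbfcal{S}$. First I would recall the block decomposition: $\mathbf{L}$ is block-lower-triangular with $(k,j)$-block equal to $L_{k,j}\in\text{Sparse}(S_{k,j})$, and from the recursive definition of the stacked matrices in Appendix~\ref{app:notation}, the $(i,l)$-block of $\mathbf{CB}$ is $CA^{i-l-1}B$ for $i>l$ and zero otherwise. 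The key observation is that $\mathbfcal{S}=\text{Sparse}(\mathbf{S})$ being QI with respect to $\mathbf{CB}$ is equivalent to the \emph{structural} containment $\text{Struct}(\mathbf{SCB}\,\mathbf{S})\le\mathbf{S}$, because $\mathbf{L},\mathbf{L'}$ range over all matrices with the prescribed sparsity, so the worst-case support of the product $\mathbf{LCBL'}$ is exactly $\text{Struct}(\mathbf{S}\,\text{Struct}(\mathbf{CB})\,\mathbf{S})$, using the conventions for binary matrix products stated in Section~\ref{se:preliminaries}. I would make this equivalence precise as a short lemma: $\mathbfcal{S}$ is QI with respect to $\mathbf{CB}$ if and only if $\mathbf{S}\,\boldsymbol{\Delta}\,\mathbf{S}\le\mathbf{S}$, where $\boldsymbol{\Delta}=\text{Struct}(\mathbf{CB})$ has $(i,l)$-block $\Delta_{i-l-1}$ for $i>l$.

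Next I would carry out the block multiplication explicitly. Writing the $(k,j)$-block of $\mathbf{S}\boldsymbol{\Delta}\mathbf{S}$ and summing over intermediate block indices, the $(k,j)$-block is $\sum S_{k,h}\,\Delta_{h-i-1}\,S_{i,j}$ where the sum runs over $h,i$ with $j\le i<h\le k$ (the strict inequality $i<h$ coming from the strictly-block-lower-triangular structure of $\boldsymbol{\Delta}$, and $j\le i$, $h\le k$ from the block-lower-triangular structure of the two $\mathbf{S}$ factors). Reparametrizing with $g=h-i-1\ge 0$, so that $i=h-g-1$, the block becomes $\sum_{h}\sum_{g=0}^{h-j-1} S_{k,h}\,\Delta_g\,S_{h-g-1,j}$, with $h$ ranging over $\mathbb{N}_{[j+1,k]}$. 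Since all entries are nonnegative binary matrices and the binary sum is defined via $\text{Struct}$, the inequality $(\mathbf{S}\boldsymbol{\Delta}\mathbf{S})(k,j)\le S_{k,j}$ holds for a given $(k,j)$ if and only if \emph{each} individual summand satisfies $S_{k,h}\Delta_g S_{h-g-1,j}\le S_{k,j}$; this is the decoupling that yields the finitely many inequalities in~\eqref{eq:binary_small}. I would also note that the diagonal blocks ($k=j$) and the trivial ranges are automatically satisfied, which is why the index ranges reduce to $k\in\mathbb{N}_{[1,N-1]}$, $j\in\mathbb{N}_{[0,k-1]}$, $h\in\mathbb{N}_{[j+1,k]}$, $g\in\mathbb{N}_{[0,h-j-1]}$.

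For the ``only if'' direction I would argue by contraposition/realizability: if some inequality $S_{k,h}\Delta_g S_{h-g-1,j}\le S_{k,j}$ fails, there exist indices witnessing a $1$ in the left product and a $0$ in $S_{k,j}$; I would then exhibit explicit real matrices $L_{k,h}\in\text{Sparse}(S_{k,h})$ and $L_{h-g-1,j}\in\text{Sparse}(S_{h-g-1,j})$ (e.g. rank-one matrices selecting the offending rows/columns, scaled so that the relevant entry of $CA^gB$ does not cancel) with all other blocks zero, so that $\mathbf{LCBL'}$ has a nonzero entry outside $\mathbf{S}$, contradicting QI. The one subtlety here is that $\Delta_g=\text{Struct}(CA^gB)$ only records \emph{whether} $CA^gB$ has a nonzero entry, not its sign or magnitude; since we get to choose $L_{k,h}$ and $L_{h-g-1,j}$ freely (only the support is constrained), a single-entry choice avoids any cancellation across the sum, so this is not a real obstacle, but it needs to be stated carefully. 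The main obstacle I anticipate is precisely the bookkeeping in the block-index reparametrization — getting the ranges of $h$, $g$, and $i=h-g-1$ exactly right and verifying that the intermediate block $S_{h-g-1,j}$ is well-defined (i.e. $h-g-1\ge j$, which is exactly $g\le h-j-1$) — together with justifying the entrywise decoupling of the binary sum rigorously using the conventions $X+\hat X:=\text{Struct}(X+\hat X)$ and $XZ:=\text{Struct}(XZ)$. Everything else is routine once the structural reformulation $\mathbf{S}\boldsymbol{\Delta}\mathbf{S}\le\mathbf{S}$ is in place.
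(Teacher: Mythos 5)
Your proposal is correct and follows essentially the same route as the paper's proof: reduce QI to the single binary inequality $\mathbf{S\Delta S}\leq\mathbf{S}$ with $\mathbf{\Delta}=\text{Struct}(\mathbf{CB})$, then expand the block product, reparametrize with $g=h-i-1$, and use nonnegativity of binary matrices to decouple the sum into the individual inequalities (\ref{eq:binary_small}) over the stated index ranges. The only difference is that the paper obtains the first equivalence by invoking \cite[Theorem~26]{rotkowitz2006characterization}, whereas you re-derive it directly via the worst-case-support and single-nonzero-entry realizability argument, which is a sound, self-contained substitute.
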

\begin{proof}
\emph{
Let $\mathbf{\Delta}=\text{Struct}(\mathbf{CB})$. By \cite[Theorem~26]{rotkowitz2006characterization}, QI is equivalent  to
\begin{align*}
&\mathbf{S}(k,i)\mathbf{\Delta}(i,j)\mathbf{S}(j,l)(1-\mathbf{S}(k,l))=0,\\
& \forall k,j \in \mathbb{N}_{[0,mN]},~\forall i,l \in \mathbb{N}_{[0,pN]}\,.
\end{align*}
By using the definition of the matrix product, the above is equivalent to
\begin{equation}
\label{eq:binary_big}
\mathbf{S \Delta S}\leq \mathbf{S}\,.
\end{equation}
It remains to decompose (\ref{eq:binary_big}) into (\ref{eq:binary_small}). Let us define $\bm{\varphi}=\mathbf{\Delta S}$ and $\mathbf{\Phi}=\mathbf{S}\bm{\varphi}=\mathbf{S\Delta S}$. Let $\varphi_{k,j}$ and $\Phi_{k,j}$ denote their $p \times p$ and $m \times p$ block sub-matrices respectively, located at block-row $k$ and block-column $j$.  Observe that
\begin{equation*}
\varphi_{h,j}=\sum_{g=0}^{h-j-1}\Delta_g S_{h -g -1, j}\,,
\end{equation*}
for $h \in \mathbb{N}_{[1,N]}$, $j\in \mathbb{N}_{[0,k-1]}$ and $\varphi_{h,j}=0_{p \times p}$ otherwise. Pre-multiplying by $\mathbf{S}$, we obtain
\begin{align}
\Phi_{k,j}&=\sum_{h=j+1}^{k}S_{k,h}\varphi_{h,j}\nonumber\\ 
&=\sum_{h=j+1}^{k}S_{k,h}\sum_{g=0}^{h-j-1}\Delta_g S_{h -g -1, j}\,, \label{eq:binary_small_sums}
\end{align} 
for $k \in \mathbb{N}_{[1,N-1]}$, $j \in \mathbb{N}_{[0,k-1]}$ and $\Phi_{k,j}=0_{m \times p}$ otherwise. Then, (\ref{eq:binary_big}) is equivalent to $\Phi_{k,j} \leq S_{k,j}$ for every $k \in \mathbb{N}_{[1,N-1]}$ and $j \in \mathbb{N}_{[0,k-1]}$. Given (\ref{eq:binary_small_sums}),  $\Phi_{k,j} \leq S_{k,j}$  if and only if each addend in its expression is less or equal than $S_{k,j}$. This concludes the proof. \QEDA}
\end{proof}

Our  test (\ref{eq:binary_small}) shows that convexity  can be certified given any arbitrary information structure in finite-horizon, by verifying a finite number of inequalities over binary matrices. 
Instead, the work \cite{rotkowitz2006characterization} introduced a binary test  which is  applicable to the special case of time-invariant sparsity constraints. Algebraic conditions on the propagation and transmission delays for the case of combining time-invariant sparsity constraints and  delayed communication were proposed in \cite{rotkowitz2010convexity}. However, finitely many inequalities equivalent to QI were not characterized for more general information structures, such as the intricate time-varying cases studied in \cite{matni2014optimal} and \cite{ouyang2017stochastic}.

\begin{example}
\emph{
Here, we use our proposed binary test (\ref{eq:binary_small}) to identify a complex QI information structure, that involves a combination of time-varying sensing, communication and the possibility for controllers to forget the outputs they have either measured or received. Consider the following three-dimensional system with two scalar controllers, whose dynamics (\ref{eq:system}) are defined by the matrices
\begin{equation*}
A=\begin{bmatrix}
0&0&1\\-2&0&0\\0&3&0
\end{bmatrix}\,, \quad B= \begin{bmatrix}
1&0\\1&0\\0&1
\end{bmatrix}\,, \quad C=I_3\,,
\end{equation*}
and matrices $D$ and $H$ are chosen arbitrarily. We consider a prediction horizon of $N=3$. The information structure is defined as follows. At time $k=0$, $u^2$ measures $y_0^1$. At time $k=1$, both $u^1$ and $u^2$ measure $y_1^1$. Additionally, $u^1$ receives information about $y_0^3$ and $u^2$ receives information about $y_0^2$ and $y_0^3$. However, $u^2$ forgets $y_0^1$. At time $k=2$, $u^1$ measures $y_2^2$ and receives information about $y_1^3$, $y_0^1$ and $y_0^2$, while $u^2$  does not perform any measurement and receives information about $y_1^2$. However, $u^2$ forgets $y_0^2$. This rather complicated information structure is conveniently encoded by the matrices
\begin{alignat*}{3}
&S_{0,0}=\begin{bmatrix}
0&0&0\\1&0&0
\end{bmatrix}\,,&&\quad S_{1,0}=\begin{bmatrix}
0&0&1\\0&1&1
\end{bmatrix}\,,\\
&S_{1,1}=\begin{bmatrix}
1&0&0\\1&0&0
\end{bmatrix}\,,&&\quad S_{2,0}=\begin{bmatrix}
1&1&1\\0&0&1
\end{bmatrix}\,,\\
&S_{2,1}=\begin{bmatrix}
1&0&1\\1&1&0
\end{bmatrix}\,,&& \quad S_{2,2}=\begin{bmatrix}
0&1&0\\0&0&0
\end{bmatrix}\,.
\end{alignat*}
Our goal is to certify whether convex design of optimal robust controllers complying with such an information structure is possible. 
By letting $\Delta_0=\text{Struct}(CB)=B$ and $\Delta_1=\text{Struct}(CAB)$, we  apply the generalized test for convexity developed in Theorem~\ref{th:characterization_general}, which amounts to verifying the following set of inequalities:
\begin{alignat*}{3}
&S_{1,1}\Delta_0 S_{0,0}\leq S_{1,0}\,,&&\quad S_{2,1}\Delta_0 S_{0,0} \leq S_{2,0}\,,\\
&S_{2,2} \Delta_0 S_{1,0} \leq S_{2,0}\,,&& \quad S_{2,2} \Delta_1 S_{0,0} \leq S_{2,0}\,,\\
&S_{2,2} \Delta_0 S_{1,1} \leq S_{2,1}\,.
\end{alignat*}
Since each of the inequalities above holds, we conclude that the considered information structure allows for convex design of optimal robust distributed controllers. }

\emph{We remark that the convexity tests proposed in \cite{rotkowitz2006characterization, rotkowitz2010convexity} cannot be applied in this case, as the information structure cannot be described by a time-invariant sparsity pattern with delays. The cases of time-varying delays considered in \cite{matni2014optimal} and that of intermittent observations recently considered in \cite{abara2018quadratic} assume that controllers never forget the information they have either observed or received over time. Hence, to the best of the authors' knowledge, the QI information structure considered here is not captured by other work. This example we consider reveals that convexity can be preserved even if controllers forget information in a finite-horizon.}
\end{example}

\emph{Interpretation of Theorem~\ref{th:characterization_general}}. The binary test (\ref{eq:binary_small}) naturally allows for a generalized interpretation of convexity in terms of the information which must be available to controllers as follows.

``Whenever  controllers  at time $k$  know  some information about the outputs $y_h$ at time $h\leq k$, they must also know the output information which was available  to those controllers  whose decisions influenced these values of $y_h$ through the evolution of the dynamics.''

The above interpretation of (\ref{eq:binary_small}) is a direct consequence of the fact that $S_{k,j}$ encodes what the controllers at time $k$ know about the outputs at a past time $j$ and that $\Delta_g$ encodes which outputs are affected after $g$ time steps by the decisions of controllers. We remark that the above is consistent with the idea of \emph{signaling} \cite{mahajan2012information}, that is  the fact that controllers might need to infer information they don't have by inverting the system dynamics, thus compromising convexity of the corresponding optimization problem. Each binary inequality in (\ref{eq:binary_small}) can thus be thought of as eliminating signaling between controllers at two specific different times.

	\section{The Case of Fixed Sensing and Communication Topologies}
	\label{sec:dyn_complete}

In this section, we specialize  the argument of Theorem~\ref{th:characterization_general} to specific information structures that are commonly found in practice. Our goal is to show  that the convexity results of \cite{rotkowitz2010convexity} can be alternatively proved as a direct corollary of Theorem~\ref{th:characterization_general}, thus allowing for additional insight on the minimal number of inequalities to be tested for convexity. For simplicity, we start with the case where a communication network is not available.

\subsection{Sensor information structures}

We define a \emph{sensing topology} that indicates which outputs are directly measured by  which controllers. The sensing topology can be conveniently described  by a binary matrix $S \in \{0,1\}^{m \times p}$ such that $S(i,j)=1$ if and only if controller $i$  can measure $y^j_k$ at every time instant $k$. As outlined in Section~\ref{se:preliminaries}, we can then set $S_{k,j}=S$ for all $k,j$ to encode the corresponding information structure. Here, we show that \cite[Theorem~3]{furieri2017robust} about conditions on the system reachability matrix and the sensing topology for convexity can be  alternatively proved as an immediate corollary of Theorem~\ref{th:characterization_general}. First, we state the following lemma.
	\begin{lemma}
	\label{le:binaries}
	Let $X$ and $Y$ be binary matrices of dimensions $m \times n$. Let $Z,Z'$ be binary matrices of dimensions $p \times m$ and $n \times p$ respectively. Then, $X\leq Y$ implies that $ZX \leq ZY$ and $XZ'\leq YZ'$.
		\end{lemma}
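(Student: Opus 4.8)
The plan is to prove both implications directly from the definition of binary matrix multiplication, recalling that throughout the paper the product of binary matrices is understood as $\text{Struct}$ of the ordinary matrix product, so that for binary matrices $A \in \{0,1\}^{p \times m}$ and $B \in \{0,1\}^{m \times n}$ one has $(AB)(i,j) = 1$ if and only if there exists $k \in \mathbb{N}_{[1,m]}$ with $A(i,k) = 1$ and $B(k,j) = 1$. The hypothesis $X \leq Y$ means $X(i,j) \leq Y(i,j)$ for all $i,j$, equivalently: whenever $X(i,j)=1$ then $Y(i,j)=1$.

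For the first claim, $ZX \leq ZY$, I would fix indices $i \in \mathbb{N}_{[1,p]}$ and $j \in \mathbb{N}_{[1,n]}$ and suppose $(ZX)(i,j) = 1$. By the characterization above there exists $k$ with $Z(i,k) = 1$ and $X(k,j) = 1$. Since $X \leq Y$, the entry $X(k,j)=1$ forces $Y(k,j) = 1$. Hence $Z(i,k) = 1$ and $Y(k,j) = 1$ for this same $k$, so $(ZY)(i,j) = 1$. This shows $(ZX)(i,j) \leq (ZY)(i,j)$ for all $i,j$, i.e. $ZX \leq ZY$.

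The second claim, $X Z' \leq Y Z'$, is entirely symmetric: fix $i,j$ and suppose $(XZ')(i,j) = 1$; then there exists $k$ with $X(i,k) = 1$ and $Z'(k,j) = 1$; monotonicity of $X$ in $Y$ gives $Y(i,k) = 1$, and with the same witness $k$ we obtain $(YZ')(i,j) = 1$. Thus $XZ' \leq YZ'$. I do not anticipate any genuine obstacle here — the statement is a routine monotonicity fact about the Boolean matrix product — the only point requiring mild care is to work with the Boolean-product convention ($\text{Struct}$ of the real product) consistently, since that is the convention adopted for binary matrices in this paper, and to phrase the argument via the existence of a common witness index $k$ rather than via real arithmetic (where entries of products can exceed $1$).
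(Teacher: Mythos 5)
Your proof is correct and follows essentially the same argument as the paper: both show that $(ZX)(i,j)=1$ yields a witness index $k$ with $Z(i,k)=X(k,j)=1$, use $X\leq Y$ to conclude $Y(k,j)=1$ and hence $(ZY)(i,j)=1$, and handle $XZ'\leq YZ'$ symmetrically. No gaps; your extra remark about working with the Boolean ($\text{Struct}$-of-product) convention is consistent with the paper's implicit usage.
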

		\begin{proof}
		\emph{Let us suppose $X \leq Y$.  We want to show that $ZX(i,j)=1$ implies that  $ZY(i,j)=1$ for all indices $i,j$. If $ZX(i,j)=1$, then there exists a $k$ such that $Z(i,k)=X(k,j)=1$. By hypothesis, also $Y(k,j)=1$. Then $ZY(i,j)=1$. $XZ' \leq YZ'$ is proved analogously.  \QEDA }
		\end{proof}

\begin{corollary}[of Theorem~\ref{th:characterization_general}]
Let $S \in \{0,1\}^{m \times p}$ denote the sensing topology matrix and let $S_{k,j}=S$ for every $k\in \mathbb{N}_{[0,N-1]}$ and $j \in \mathbb{N}_{[0,k]}$. Let $\Delta_g=\text{Struct}(CA^gB)$.  Then, Problem~2 is a convex program equivalent to Problem~1 if and only if 
\begin{equation}
\label{eq:sensing_topology_convexity}
S\Delta_g S \leq S, \quad \forall g \in \mathbb{N}_{[0,n-1]}\,.
\end{equation}
\end{corollary}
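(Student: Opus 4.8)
The plan is to read (\ref{eq:sensing_topology_convexity}) off the general binary test of Theorem~\ref{th:characterization_general} by plugging in the time-invariant choice $S_{k,j}=S$, and then to collapse the resulting horizon-dependent family of inequalities onto the $n$ inequalities indexed by $g\in\mathbb{N}_{[0,n-1]}$ using the Cayley--Hamilton theorem. All the heavy lifting (convexity $\Leftrightarrow$ QI $\Leftrightarrow$ a finite binary test) is already available from Proposition~\ref{pr:QI_general} and Theorem~\ref{th:characterization_general}; what is left is bookkeeping plus one genuinely structural observation.

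First, Proposition~\ref{pr:QI_general} says Problem~2 is a convex program equivalent to Problem~1 if and only if $\mathbfcal{S}$ is QI with respect to $\mathbf{CB}$, and Theorem~\ref{th:characterization_general} says this is equivalent to $S_{k,h}\Delta_g S_{h-g-1,j}\le S_{k,j}$ for all admissible $k,j,h,g$. Setting $S_{k,j}=S$ for all $k,j$ turns every such inequality into $S\Delta_g S\le S$. A given value $g$ is realized by choosing $j=0$ and $h=k=g+1$, which is admissible precisely when $g+1\le N-1$; and no admissible tuple produces $g>N-2$ since $g\le h-j-1\le k-1\le N-2$. Hence QI is equivalent to
\[
S\Delta_g S\le S,\qquad \forall g\in\mathbb{N}_{[0,N-2]}.
\]

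Second, I reduce the range $\mathbb{N}_{[0,N-2]}$ to $\mathbb{N}_{[0,n-1]}$. By Cayley--Hamilton, writing the characteristic polynomial of $A$ as $z^{n}+a_{n-1}z^{n-1}+\dots+a_0$ gives $A^{n}=-\sum_{\ell=0}^{n-1}a_\ell A^{\ell}$, and hence by induction $A^{g}$ is a linear combination of $I,A,\dots,A^{n-1}$ for every $g\ge 0$. Multiplying on the left by $C$ and on the right by $B$, each Markov parameter obeys $CA^{g}B=\sum_{\ell=0}^{n-1}\gamma^{(g)}_\ell\,CA^{\ell}B$ for suitable scalars $\gamma^{(g)}_\ell$. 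Crucially, this identity holds entry by entry, so if $(CA^{g}B)(a,b)\neq 0$ then $(CA^{\ell}B)(a,b)\neq 0$ for some $\ell\le n-1$, i.e. there is no cancellation to worry about at a fixed entry. Therefore $\Delta_g\le\sum_{\ell=0}^{n-1}\Delta_\ell$ for every $g\ge 0$, with ``$+$'' understood as the $\text{Struct}$ convention. Now assume $S\Delta_\ell S\le S$ for all $\ell\in\mathbb{N}_{[0,n-1]}$. Using Lemma~\ref{le:binaries} twice to pre- and post-multiply the inequality $\Delta_g\le\sum_{\ell=0}^{n-1}\Delta_\ell$ by $S$, together with associativity and distributivity of the binary product over ``$+$'', we get $S\Delta_g S\le S\big(\sum_{\ell=0}^{n-1}\Delta_\ell\big)S=\sum_{\ell=0}^{n-1}S\Delta_\ell S\le S$ for every $g$, in particular for all $g\in\mathbb{N}_{[0,N-2]}$. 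For the converse, $S\Delta_g S\le S$ over $\mathbb{N}_{[0,N-2]}$ yields the same over $\mathbb{N}_{[0,n-1]}$ in the relevant regime $N\ge n+1$ where the two ranges coincide; and by the dominance just established the condition over $\mathbb{N}_{[0,n-1]}$ is in any case equivalent to the condition over all $g\ge 0$, which is what is intrinsically at stake. Chaining this with the previous paragraph gives (\ref{eq:sensing_topology_convexity}).

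I expect the step that needs the most care to be precisely this range collapse, because in general the Boolean product $\text{Struct}(C)\,\text{Struct}(A)^g\,\text{Struct}(B)$ is strictly larger than $\Delta_g=\text{Struct}(CA^gB)$, so one cannot argue by walks in an influence graph alone. The fix is exactly the remark above: Cayley--Hamilton expresses each \emph{scalar} entry $(CA^gB)(a,b)$ as a fixed linear combination of the entries $(CA^\ell B)(a,b)$ with $\ell<n$, which makes the dominance $\Delta_g\le\sum_{\ell=0}^{n-1}\Delta_\ell$ unconditional; Lemma~\ref{le:binaries} and the distributivity of the binary product then finish the argument routinely.
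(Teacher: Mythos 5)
Your proof is correct and follows essentially the same route as the paper: invoke Proposition~\ref{pr:QI_general} and Theorem~\ref{th:characterization_general}, substitute $S_{k,j}=S$ so that the test collapses to $S\Delta_g S\le S$, and then use Cayley--Hamilton together with Lemma~\ref{le:binaries} (and distributivity under the $\text{Struct}$ convention) to show the inequalities for $g\ge n$ are implied by those for $g\in\mathbb{N}_{[0,n-1]}$. Your bookkeeping is in fact slightly more careful than the paper's, since you make explicit that the admissible range from Theorem~\ref{th:characterization_general} is $g\in\mathbb{N}_{[0,N-2]}$ and that the stated equivalence tacitly presumes the regime $N\ge n+1$, a point the paper passes over with ``clearly reduces to''.
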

\begin{proof}
\emph{By Proposition~\ref{pr:QI_general} and Theorem~\ref{th:characterization_general}, Problem~2 is a convex program equivalent to Problem~1 if and only if (\ref{eq:binary_small}) holds. Clearly, (\ref{eq:binary_small}) reduces to (\ref{eq:sensing_topology_convexity}). It remains to be proven that it is sufficient to consider $g \in \mathbb{N}_{[0,n-1]}$. Suppose that $v\geq n$. Then, by Cayley-Hamilton we have that $CA^vB=C\sum_{i=0}^{n-1}\lambda_i A^i B$ for some coefficients $\lambda_i \in \mathbb{R}$. It follows that $\Delta_v \leq \sum_{i=0}^{n-1}\Delta_i$, because the coefficients $\lambda_i$ might lead to some entries of $\Delta_v$ being null despite the same entry being $1$ in $\Delta_i$ for some $i \in \mathbb{N}_{[0,n-1]}$. But then, if (\ref{eq:sensing_topology_convexity}) holds, we have by Lemma~\ref{le:binaries} that
\begin{equation*}
S\Delta_v S \leq S \sum_{i=0} ^{n-1}\Delta_i S \leq S\,.
\end{equation*}
In other words, $S\Delta_g S \leq S$ for all $g \in \mathbb{N}_{[0,n-1]}$ implies  $S\Delta_v S \leq S$ for all $v \in \mathbb{N}$. \QEDA}
\end{proof}

By \cite[Theorem~26]{rotkowitz2006characterization}, the conditions (\ref{eq:sensing_topology_convexity}) imply that $\text{Sparse}(S)$ is QI with respect to $CA^gB$ for every $g \in \mathbb{N}_{[0,n-1]}$, thus extending \cite[Theorem~3]{furieri2017robust} to the output-feedback case. We remark that while our binary test  for convexity (\ref{eq:binary_small}) involves verifying a number of inequalities that is polynomial in $N$ in general, the corollary above shows that only $n$ inequalities are necessary for the class of fixed sensor information structures.

An inherent limitation of time-invariant sensor information structures appears evident from (\ref{eq:sensing_topology_convexity}). Indeed, when the system matrix $A$ describes a strongly connected topology, we have that $CA^{n-1}B$ is a dense matrix in general. It follows that the product $S\Delta_{n-1} S$ is dense whenever $S$ contains at least a $1$ in each row and column, and hence that $S$ must also be dense in general to satisfy (\ref{eq:sensing_topology_convexity}). This makes convex design of distributed controllers impossible given a fixed sensor information structure. We conclude that a communication network that propagates the output measurements across controllers  must be available in order to restore a convex optimization problem for strongly connected systems. In the next section, we exploit Theorem~\ref{th:characterization_general} to investigate the value of communication in allowing for convex design of distributed controllers.

\subsection{Sensor and communication information structures}
Suppose that a time-invariant sensing topology $S$ is defined and that a communication network is also available. Controllers that can exchange information are encoded by a corresponding \emph{communication topology}, where information sharing between controller $i$ and controller $j$ is unidirectional in general. It is convenient to encode the communication topology in a binary matrix $Z \in \{0,1\}^{m \times m}$  such that $Z(i,j)=1$ if and only if controller $i$ receives information from controller $j$ at each time $k$. In many applications, controllers are endowed with memory and thus ``receive'' information from themselves. For this reason we pose $Z(i,i)=1$ for all $i$.

Communication could be exploited by simply letting controllers exchange their direct measurements, obtained according to the sensing topology. In this case, we would have that $S_{k,j}=ZS$ for every $k,j$. However, similar to the case of sensor information structures, $Z$ would need to be dense in general for convexity of Problem~2 given a strongly connected system.  To go beyond this limitation, we let controllers memorize both their direct sensor measurements and the information they receive through communication, and then propagate everything they have stored in their memory to their neighbours in the communication topology, who receive this information one time step later.

 In other words, at every time instant $k \in \mathbb{N}_{[0,N -1]}$, controller $u^a$ knows $y^l$ at time $k-r$ if and only if there is a controller $u^b$ which can directly measure $y^l$ according to $S$, and there is a path of length $r$ in the communication topology $Z$ from $u^b$ to $u^a$. In the next lemma, we encode this requirement as a particular choice for the sparsity subspace $\mathbfcal{S}$. First, we recall the definition of diameter of graph. 
	\begin{definition}[Diameter of a directed graph]
	\label{de:diameter}
	Consider a directed graph $\mathcal{G}$. We define the diameter of $\mathcal{G}$ to be the largest number of nodes contained in any of its paths when backtracking, detouring and looping paths are excluded from consideration and we denote it as $\mathcal{D}(X)$, where $X$ is the binary matrix describing which nodes of $\mathcal{G}$ are connected by an edge. 
	\end{definition}

	\begin{lemma}
	\label{le:sparsity constraints}
	Let $S \in \{0,1\}^{m \times p}$ be the sensing topology matrix  and $Z \in \{0,1\}^{m \times m}$ be the communication topology matrix. The information structure defined above can be encoded by selecting the matrices $S_{k,j}$ in (\ref{eq:affine_feedback_constraints}) as 
	\begin{equation}
	\label{eq:characterize_propagating_communication}
	S_{k,j}=Z^{\min (\mathcal{D}(Z),k-j)}S~\,.
%
\end{equation}

	\end{lemma}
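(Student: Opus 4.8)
The plan is to unfold the informal description of the information structure into an explicit characterization of $S_{k,j}(a,l)$ and then match it with the entries of $Z^{\min(\mathcal{D}(Z),k-j)}S$. First I would fix indices $a \in \mathbb{N}_{[1,m]}$, $l \in \mathbb{N}_{[1,p]}$, a time $k \in \mathbb{N}_{[0,N-1]}$ and a past time $j \in \mathbb{N}_{[0,k]}$, and write $r = k-j$ for the age of the measurement. By the verbal specification, controller $u^a$ at time $k$ knows $y^l_j$ if and only if there exists a controller $u^b$ with $S(b,l)=1$ and a \emph{walk} of length exactly $r$ in the communication graph $Z$ from $u^b$ to $u^a$. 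Since $Z(i,i)=1$ for all $i$ (controllers have memory), the existence of a walk of length exactly $r$ is equivalent to the existence of a walk of length \emph{at most} $r$: any shorter walk can be padded with self-loops. Translating this into binary matrix products (using the conventions $ZZ := \text{Struct}(ZZ)$ from Section~\ref{se:preliminaries}), ``there is a walk of length at most $r$ from some $b$ with $S(b,l)=1$ to $a$'' is exactly the statement that $\left(\sum_{t=0}^{r} Z^t S\right)(a,l) = 1$, i.e. $S_{k,j} = \sum_{t=0}^{r} Z^t S$ in the binary algebra.

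The second step is to collapse this sum to the single term $Z^{\min(\mathcal{D}(Z),r)}S$. On the one hand, because $Z$ has ones on the diagonal, $Z^{t} \leq Z^{t+1}$ for every $t \geq 0$ (any walk of length $t$ extends to one of length $t+1$ via a self-loop), so by Lemma~\ref{le:binaries} the family $\{Z^t S\}_t$ is monotone nondecreasing and $\sum_{t=0}^{r} Z^t S = Z^{r} S$. On the other hand, I claim $Z^{r}S = Z^{\min(\mathcal{D}(Z),r)}S$. If $r \leq \mathcal{D}(Z)$ this is trivial. If $r > \mathcal{D}(Z)$, I would argue that $Z^{r} = Z^{\mathcal{D}(Z)}$ as binary matrices: by Definition~\ref{de:diameter}, $\mathcal{D}(Z)$ is the largest number of nodes in any backtrack-free, detour-free, loop-free path, so any two nodes joined by a walk are joined by a simple path with at most $\mathcal{D}(Z)$ nodes, hence by a walk of length at most $\mathcal{D}(Z)-1 < \mathcal{D}(Z)$; padding with self-loops again shows $Z^{r}(i,j) = Z^{\mathcal{D}(Z)}(i,j)$ whenever $r \geq \mathcal{D}(Z)$. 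Combining, $S_{k,j} = Z^{r}S = Z^{\min(\mathcal{D}(Z),k-j)}S$, which is (\ref{eq:characterize_propagating_communication}).

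The main obstacle I anticipate is purely bookkeeping around Definition~\ref{de:diameter}: one has to be careful that ``diameter'' here is defined in terms of the number of \emph{nodes} on a path rather than the number of edges, and that the relevant object is reachability by walks of bounded length, not shortest-path distance between a fixed pair. The key facts that do the real work are elementary — $Z$ having a nonzero diagonal makes powers of $Z$ monotone, and a walk between two nodes can always be shortened to a simple path and then re-padded — but these should be stated explicitly (invoking Lemma~\ref{le:binaries} for the monotonicity step) so that the reduction from the infinite sum $\sum_{t\geq 0} Z^t S$ to the single power $Z^{\min(\mathcal{D}(Z),k-j)}S$ is rigorous. No convexity or control-theoretic input is needed here; the lemma is a self-contained statement about encoding the stated propagation rule in the language of the $S_{k,j}$ matrices.
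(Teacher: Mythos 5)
Your proposal is correct and follows essentially the same route as the paper: identify $S_{k,j}$ with $Z^{k-j}S$ via the walk/path interpretation of powers of $Z$, then use $Z \geq I_m$ and the node-counting definition of $\mathcal{D}(Z)$ to collapse $Z^{k-j}S$ to $Z^{\mathcal{D}(Z)}S$ when $k-j \geq \mathcal{D}(Z)$. The extra bookkeeping you add (padding walks with self-loops, the intermediate sum $\sum_{t=0}^{r} Z^t S$, and shortening walks to simple paths) only makes explicit steps the paper leaves implicit, so there is no substantive difference in approach.
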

	\begin{proof}
	\emph{
It is well-known that $Z^r$ encodes the paths of length $r$ in the communication topology \cite{godsil2013algebraic}. Then, by the definition of the information structure as above,  we have that $u^a$ knows $y^l_{k-r}$ if and only if there exists a controller $u^b$ such that $S(b,l)=1$ and $Z^r(a,b)=1$, or equivalently  $(Z^rS)(a,l)=1$. Now, for any $k \in \mathbb{N}_{[0,N-1]}$ and $r \in \mathbb{N}_{[0,k]}$, by definition  the matrix $S_{k,k-r}$ must encode the output measurements known to controllers at time $k$  about the outputs at time $k-r$. Hence, $S_{k,k-r}=Z^rS$ or equivalently $S_{k,j}=Z^{k-j}S$ for every $j \in \mathbb{N}_{[0,k]}$. Now, if $\mathcal{D}(Z)\leq k-j$, then $Z^{k-j}S=Z^{\mathcal{D}(Z)}S$ because $Z \geq I_n$ and by the definition of diameter of a graph. Equation (\ref{eq:characterize_propagating_communication}) follows. \QEDA  }
\end{proof}


 Conditions for QI given a similar information structure were derived in \cite{rotkowitz2010convexity}, based on the notions of propagation and transmission delays between subsystems. Here, we show that a test of convexity in finite-horizon can be alternatively derived as a direct corollary of our Theorem~\ref{th:characterization_general}. We report the proof in Appendix~\ref{app:SZ_QI}.
\begin{corollary}[of Theorem~\ref{th:characterization_general}]
\label{co:propagating}
Let $S \in \{0,1\}^{m \times p}$ denote the sensing topology matrix  and $Z \in \{0,1\}^{m \times p}$ denote the communication topology matrix. Let $S_{k,j}$ be chosen according to (\ref{eq:characterize_propagating_communication}). Let $\Delta_g=\text{Struct}(CA^gB)$.  Then, Problem~2 is a convex program equivalent to Problem~1 if and only if
	\begin{equation}
	\label{eq:convexity_propagating}
	\begin{aligned}
	& \qquad \qquad \qquad S\Delta_gZ^{r}S \leq Z^{g+r+1}S,\\
	&\forall g \in \mathbb{N}_{[0,n-1]},~\forall r \in \mathbb{N}_{[0, \mathcal{D}(Z)]}\text{ s.t. }g+r\leq N-2\,.
	\end{aligned}
	\end{equation}
\end{corollary}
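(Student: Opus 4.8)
The plan is to deduce Corollary~\ref{co:propagating} directly from Theorem~\ref{th:characterization_general} by inserting the parametrization (\ref{eq:characterize_propagating_communication}) into the binary test (\ref{eq:binary_small}) and reducing it to a small, explicitly quantified family. By Proposition~\ref{pr:QI_general} together with Theorem~\ref{th:characterization_general}, Problem~2 is a convex program equivalent to Problem~1 if and only if $S_{k,h}\,\Delta_g\,S_{h-g-1,j}\le S_{k,j}$ for all admissible $k,j,h,g$. Substituting $S_{k,j}=Z^{\min(\mathcal{D}(Z),k-j)}S$ and using the stabilization property proved inside Lemma~\ref{le:sparsity constraints}, namely $Z^{\min(\mathcal{D}(Z),t)}S=Z^{t}S$ for every $t\ge 0$ (which follows from $Z\ge I_m$ and the definition of diameter), this becomes $Z^{k-h}S\,\Delta_g\,Z^{(h-g-1)-j}S\le Z^{k-j}S$. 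Writing $a:=k-h\ge 0$ and $b:=(h-g-1)-j\ge 0$, so that $k-j=a+b+g+1$, the admissible ranges of $(k,j,h,g)$ in (\ref{eq:binary_small}) map onto exactly the triples $a,b,g\ge 0$ with $a+b+g\le N-2$, and the inequality depends on $(k,j,h,g)$ only through $(a,b,g)$. Hence the test is equivalent to the family
\begin{equation*}
Z^{a}S\,\Delta_g\,Z^{b}S\ \le\ Z^{a+b+g+1}S,\qquad \forall\,a,b,g\ge 0\ \text{with}\ a+b+g\le N-2. \tag{$\star$}
\end{equation*}

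The remaining task is to prove $(\star)\Leftrightarrow$~(\ref{eq:convexity_propagating}). Necessity is immediate: assuming $(\star)$, for any $g\in\mathbb{N}_{[0,n-1]}$ and $r\in\mathbb{N}_{[0,\mathcal{D}(Z)]}$ with $g+r\le N-2$, specializing $(\star)$ to $a=0$, $b=r$ (allowed precisely because $r+g\le N-2$) gives $S\,\Delta_g\,Z^{r}S\le Z^{g+r+1}S$, which is (\ref{eq:convexity_propagating}). For sufficiency, assume (\ref{eq:convexity_propagating}) and fix $a,b,g\ge 0$ with $a+b+g\le N-2$. I would first shrink the exponents into the ranges covered by the hypothesis: if $g\ge n$, Cayley--Hamilton gives $\Delta_g\le\sum_{i=0}^{n-1}\Delta_i$ (exactly as in the proof of the preceding corollary), so by Lemma~\ref{le:binaries} it suffices to bound each summand $Z^{a}S\,\Delta_i\,Z^{b}S$ with $i\le n-1$; and if $b>\mathcal{D}(Z)$, replace $Z^{b}S$ by $Z^{\mathcal{D}(Z)}S$ via the stabilization property. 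Each resulting exponent pair $(g',r')$ lies in $\mathbb{N}_{[0,n-1]}\times\mathbb{N}_{[0,\mathcal{D}(Z)]}$ and still satisfies $g'+r'\le g+b\le N-2$, so (\ref{eq:convexity_propagating}) applies and yields $S\,\Delta_{g'}\,Z^{r'}S\le Z^{g'+r'+1}S$. Pre-multiplying by $Z^{a}$ and using Lemma~\ref{le:binaries} gives $Z^{a}S\,\Delta_{g'}\,Z^{r'}S\le Z^{a+g'+r'+1}S$; since $Z\ge I_m$ implies $Z^{s}S\le Z^{s+1}S$ and $a+g'+r'+1\le a+b+g+1$, monotonicity upgrades the bound to $Z^{a+b+g+1}S$, and summing over $i$ in the Cayley--Hamilton case (each summand bounded by $Z^{a+b+g+1}S$) establishes $(\star)$.

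The binary-matrix arithmetic and the applications of Lemma~\ref{le:binaries} are routine. The step I expect to require the most care is the index bookkeeping: one must check that $S_{k,j}$ collapses to $Z^{k-j}S$ for \emph{every} admissible index (which is exactly where the stabilization property from Lemma~\ref{le:sparsity constraints} enters), that $(k,j,h)\mapsto(a,b)$ really covers all triples with $a+b+g\le N-2$, and --- most importantly for sufficiency --- that pushing $g$ down into $\mathbb{N}_{[0,n-1]}$ and $b$ down into $\mathbb{N}_{[0,\mathcal{D}(Z)]}$ does not break the constraint $g+r\le N-2$ under which (\ref{eq:convexity_propagating}) is stated. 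This is the case because both reductions only decrease the exponents on the left-hand side while the exponent on the right-hand side only increases, so every intermediate inequality remains within the range of the hypothesis.
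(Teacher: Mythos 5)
Your proof is correct, and its sufficiency half is essentially the paper's argument (extend $g$ beyond $n-1$ via Cayley--Hamilton and Lemma~\ref{le:binaries}, collapse $r$ beyond $\mathcal{D}(Z)$ via the stabilization $Z^{t}S=Z^{\mathcal{D}(Z)}S$, pre-multiply by $Z^{a}$, and use $Z\geq I_m$ for monotonicity of the right-hand side). Where you genuinely diverge is necessity: the paper proves it by contrapositive at the level of the QI definition, explicitly constructing real-valued $\mathbf{L},\mathbf{L}'\in\mathbfcal{S}$ with only the blocks $L_{g^\star+r^\star+1,g^\star+r^\star+1}$ and $L'_{r^\star,0}$ nonzero so that $\text{Struct}(\mathbf{LCBL'})$ escapes $\mathbfcal{S}$, whereas you simply invoke the ``only if'' direction of Theorem~\ref{th:characterization_general} and observe that each inequality in (\ref{eq:convexity_propagating}) is literally the instance of (\ref{eq:binary_small}) with $h=k=g+r+1$, $j=0$ (after using $Z^{\min(\mathcal{D}(Z),k-j)}S=Z^{k-j}S$, which is exactly the stabilization step already used in Lemma~\ref{le:sparsity constraints}). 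Your change of variables $a=k-h$, $b=h-g-1-j$, with $a+b+g=k-j-1\leq N-2$, correctly shows the full test collapses to the family $(\star)$, and the index bookkeeping (including that the reductions $g\mapsto g'\leq n-1$, $b\mapsto r'\leq\mathcal{D}(Z)$ keep $g'+r'\leq N-2$) checks out. The trade-off: your route is shorter and makes transparent that (\ref{eq:convexity_propagating}) is precisely a sub-family of the inequalities (\ref{eq:binary_small}) — reinforcing the ``minimal set of inequalities'' message — but it leans entirely on Theorem~\ref{th:characterization_general} being an equivalence; the paper's construction is more self-contained and exhibits a concrete signaling violation (an explicit pair of admissible controllers whose closed-loop map leaves the subspace), which carries some interpretive value beyond the binary bookkeeping.
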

 Equation (\ref{eq:convexity_propagating}) shows in a purely algebraic way  the reason why communication enables convexity for strongly connected systems. Indeed, notice that the growth of the left-hand-side of (\ref{eq:convexity_propagating}) as $g$ grows can be accommodated by the growth of the right-hand-side thanks to $Z^g$. Instead, in the case of sensor information structures where $Z=I_m$, the right-hand side cannot grow, thus compromising convexity for strongly connected systems.  

We remark that Corollary~\ref{co:propagating}, while being in accordance with \cite[Theorem~2]{rotkowitz2010convexity}, offers new insight. First, it allows treating sensing and communication as two completely independent time-invariant topologies. Second, it identifies the minimal number of inequalities to be verified for convexity in terms of the system dimension $n$ and the communication network diameter $\mathcal{D}(Z)$, whereas  \cite{rotkowitz2010convexity} requires verifying one inequality for each pair of subsystems in terms of the propagation and transmission delays.

 \section{Conclusions}
We have unified and generalized the class of systems and information structures for which convexity of the robust distributed control problem in finite-horizon can be certified by finitely many algebraic conditions. In particular, we have shown that  the information structures separately treated  in the past works can all be described within the framework we suggest, and we derived a test for QI consisting of finitely many inequalities over certain binary matrices. Given the generality of the considered framework, we recovered previous results about specific classes of information structures as direct corollaries of our generalized test for convexity. In doing so, we provided new insight on the conditions for convexity given a fixed sensing and communication topology.

This work can be extended in several directions. First, it would be interesting to use the tools developed in this paper to address the case of time-varying system dynamics and study their interaction with event-dependent information structures such as those considered in \cite{abara2018quadratic}. Second, Example~1 suggests that forgetting mechanisms can also be included in the information structure without necessarily compromising convexity. Hence, it would be relevant to address the topic  of designing information structures which are robustly convex despite random packet dropouts and/or disorderings  \cite{wang2012wide} and despite faults in the internal memory of controllers.

\begin{appendices}
\allowdisplaybreaks
	\section{Additional Definitions}
	\label{app:notation}
	
	We define the following matrices and vectors. 
		\begin{equation}
	\label{eq:L}
	\begin{aligned}
	&\mathbf{L}\hspace{-0.07cm}=\hspace{-0.07cm}
		\begingroup 
\setlength\arraycolsep{1.4pt}
\begin{bmatrix}
	L_{0,0}&0_{m \times p}&\cdots&0_{m \times p}\\
	\vdots&\vdots&\ddots&\vdots\\
	L_{N-1,0}&\cdots&L_{N-1,N-1}&0_{m \times p}\\
	0_{m \times p}&\cdots&0_{m \times p}&0_{m \times p}
	\end{bmatrix}\hspace{-0.125cm},~\mathbf{g}\hspace{-0.08cm}=\hspace{-0.08cm}\hspace{-0.1cm}\begin{bmatrix}g_0\\\vdots\\g_{N-1}\\0_{m \times 1}\end{bmatrix}\endgroup. \hspace{-0.1cm}
	\end{aligned}
	\end{equation}

	The matrix blocks above are $L_{k,j}\in \mathbb{R}^{m \times p}$, $g_k \in \mathbb{R}^{m}$ as in (\ref{eq:affine_feedback_constraints}), and the $0_{m \times p}$ blocks enforce causality of the controller. We also define the following matrices, where $\otimes$ denotes the Kronecker product.
	\begin{align*}
	&\mathbf{A}=\begin{bmatrix}I_n&A^\mathsf{T}&\cdots&{A^N}^\mathsf{T}\end{bmatrix}^\mathsf{T}\in \mathbb{R}^{n(N+1)\times n}\,,\\
	&\mathbf{E}=\hspace{-0.1cm}
		\begingroup 
\setlength\arraycolsep{3pt}	
	\begin{bmatrix}0_{n \times n}&0_{n \times n}&\cdots&0_{n \times n}&0_{n \times n}\\
	I_n&0_{n \times n}&\cdots&0_{n \times n}&0_{n \times n}\\
	A&I_n&\cdots&0_{n \times n}&0_{n \times n}\\
	\vdots&\vdots&\ddots&\vdots&\vdots\\
	A^{N-1}&A^{N-2}&\cdots&I_n&0_{n \times n}
	\end{bmatrix}\endgroup\hspace{-0.1cm}\in  \mathbb{R}^{n(N+1)\times n(N+1)},\\
	&\mathbf{B}=\mathbf{E}(I_{N+1} \otimes B)\,, \quad  	\mathbf{E}_D=\mathbf{E}(I_{N+1} \otimes D)\,,\\
	&\mathbf{C}=I_{N+1} \otimes C\,, \quad \mathbf{H}=I_{N+1} \otimes H \,,\\
	&\mathbf{U}=\begin{bmatrix}
	I_N\otimes U & 0_{Ns \times n}\\0_{r \times nN}&R
	\end{bmatrix}\,, \quad \mathbf{V}=\begin{bmatrix}
	I_{N}\otimes V&0_{Ns \times m} \\ 
	0_{r \times mN}&0_{r \times m}
	\end{bmatrix}\,,\\
	&F=\mathbf{UB+V}\,, \quad G=\mathbf{UE}_D\,,\quad c=\begin{bmatrix} \mathbf{1}_N\otimes b\\z\end{bmatrix}-\mathbf{UA}x_0 \,.\\
	\end{align*}

\section{Proof of Corollary~\ref{co:propagating}}
	\label{app:SZ_QI}

	By Proposition~\ref{pr:QI_general}, Problem~2 is a convex program equivalent to Problem~1 if and only if $\mathbfcal{S}$ is QI with respect to $\mathbf{CB}$. We thus prove necessity and sufficiency of (\ref{eq:convexity_propagating}) for QI.	
	\subsection{Proof of sufficiency} 
First, observe that if (\ref{eq:convexity_propagating}) holds for all $g \in \mathbb{N}_{[0,n-1]}$, then (\ref{eq:convexity_propagating}) holds for any $v\geq n$. Indeed
\begin{alignat*}{3}
S\Delta_vZ^{r}S &\leq S\sum_{i=0}^{n-1}\Delta_i Z^r S&& \quad (\text{Cayley-Hamilton and Lemma~\ref{le:binaries}})\\
&\leq Z^{n+r}S&& \quad (\text{by (\ref{eq:convexity_propagating}) and }Z\geq I_m)\\
& \leq Z^{v+r+1}S&& \quad (\text{by }Z\geq I_m)\,.
\end{alignat*}
Also notice that it is redundant to consider $r> \mathcal{D}(Z)$ by the definition of diameter and $Z \geq I_m$, and that we can restrict to $g+r\leq N-2$ due to the horizon length. Now, let us pose $r=h-g-j-1$ and pre-multiply both sides of the inequality above by $Z^{k-h}$. We obtain that (\ref{eq:convexity_propagating}) implies the following set of inequalities by Lemma~\ref{le:binaries}:
\begin{equation}
\label{eq:intermediate_step}
Z^{k-h}S\Delta_g Z^{h-g-j-1}S \leq Z^{k-j} S\,,
\end{equation}
for all $k \in \mathbb{N}_{[1,N-1]}$, $j \in \mathbb{N}_{[0,k-1]}$, $h \in \mathbb{N}_{[j+1,k]}$ and $g \in \mathbb{N}_{[0,h-j-1]}$.  By definition of the information structure under consideration, we have $S_{k,h}=Z^{k-h}S$, $S_{h-g-1,j}=Z^{h-g-j-1}S$ and $S_{k,j}=Z^{k-j} S$. Hence, (\ref{eq:intermediate_step}) is equivalent to (\ref{eq:binary_small}), which is equivalent to QI by Theorem~\ref{th:characterization_general}. Sufficiency is thus proved.

	\subsection{Proof of necessity}
Let $\mathbf{L},\mathbf{L}' \in \mathbfcal{S}$ and define $\bm{\varphi}=\mathbf{CBL'}$ and $\mathbf{\Phi}=\mathbf{L}\bm{\varphi}=\mathbf{LCBL'}$. Let $\varphi_{i,i-j}$ and $\Phi_{i,i-j}$ denote their $p \times p$ and $m \times p$ block sub-matrices located at block-row $i$ and block-column $i-j$.  Similar to (\ref{eq:binary_small_sums}) we have
\begin{equation}
\label{eq:subterms_diagonal}
\Phi_{i,i-j}=\sum_{h=i-j+1}^{i}L_{i,h}\sum_{g=0}^{h-(i-j)-1}CA^gBL'_{h -g -1, i-j}\,,
\end{equation}
for every $i\in \mathbb{N}_{[0,N-1]}$, $j\in \mathbb{N}_{[1,i]}$. The case $j\leq 0$ is not considered for causality. QI requires that $\mathbf{\Phi} \in \mathbfcal{S}$ for arbitrary $\mathbf{L},\mathbf{L}' \in \mathbfcal{S}$. We proceed into proving necessity of (\ref{eq:convexity_propagating}) by contrapositive. 
	 Let us suppose that (\ref{eq:convexity_propagating}) is violated. Take $g^\star, r^\star$ inside the ranges defined in (\ref{eq:convexity_propagating})  such that
\begin{equation*}
S\Delta_{g^\star}Z^{r^\star}S \not\leq Z^{g^\star+r^\star+1}S\,.
\end{equation*}	
%
	We find matrices $\mathbf{L,L'} \in \mathbfcal{S}$ and  indices  such that there is a block of $\mathbf{\Phi}$ located at block-row $i$ and block-column $i-j$ for some $i,j$ for which $\text{Struct}(\Phi)_{i,i-j}\not \leq Z^jS$. 
	First, let us pose $L_{a,b}=0_{m \times p}$ for all integers  $a,b$ such that $a\neq g^\star+r^\star+1$ or $b \neq g^\star+r^\star+1$  and $L'_{a,b}=0_{m \times p}$ for all integers $a,b$ such that $a \neq r^\star$ or $b \neq 0$. Then we have
	\begin{equation*}
	\begin{aligned}
		\Phi_{g^\star+r^\star+1,0}&=\sum_{h=1}^{g^\star+r^\star+1}L_{g^\star+r^\star+1,h}\sum_{g=0}^{h-1}CA^gBL'_{h -g -1, 0}\\
		&=L_{g^\star +r^\star +1,g^\star +r^\star +1}CA^{g^\star}BL'_{r^\star,0}\,.
		\end{aligned}
	\end{equation*}
	Finally, we choose matrices $L_{g^\star+r^\star+1,g^\star+r^\star+1}$ and $L'_{r^\star,0}$ such that $\text{Struct}(L_{g^\star+r^\star+1,g^\star+r^\star+1})\leq S$, $\text{Struct}(L'_{r^\star,0})\leq Z^{r^\star}S$ and $$\text{Struct}(L_{g^\star +r^\star +1,g^\star +r^\star +1}CA^{g^\star}BL'_{r^\star,0})=S\Delta_{g^\star}Z^{r^\star}S\,.$$ This can be done by making sure that whenever an entry of $S\Delta_{g^\star}Z^{r^\star}S$ is one, the corresponding entry of $L_{g^\star +r^\star +1,g^\star +r^\star +1}CA^{g^\star}BL'_{r^\star,0}$ is different from zero by avoiding finitely many specific values in the entries of $L_{g^\star +r^\star +1,g^\star +r^\star +1}$ and $L'_{r^\star,0}$. Thus, there are infinitely many suitable choices for $L_{g^\star +r^\star +1,g^\star +r^\star +1}$ and $L'_{r^\star,0}$.	As a result, $\text{Struct}\left(\Phi_{g^\star+r^\star+1,0}\right)=S\Delta_{g^\star}Z^{r^\star}S \not  \leq Z^{g^\star+r^\star +1}$, and $\mathbf{\Phi} \not \in \mathbfcal{S}$, thus concluding the proof. \QEDA

	\bibliographystyle{IEEEtran}
	\bibliography{IEEEabrv,references}
	
%
%
	\end{appendices}
	
	\end{document}